\documentclass[sigconf]{acmart}

\AtBeginDocument{%
  }

\setcopyright{acmlicensed}
\copyrightyear{2025}
\acmYear{2025}
\setcopyright{acmlicensed}
\acmConference[KDD '25] {Proceedings of the 31st ACM SIGKDD Conference on Knowledge Discovery and Data Mining V.2}{August 3--7, 2025}{Toronto, ON, Canada.}
\acmBooktitle{Proceedings of the 31st ACM SIGKDD Conference on Knowledge Discovery and Data Mining V.2 (KDD '25), August 3--7, 2025, Toronto, ON, Canada}
\acmISBN{979-8-4007-1454-2/25/08}
\acmDOI{10.1145/3711896.3737043}

\usepackage{algorithm}
\usepackage{algorithmic}

\usepackage{mathrsfs}
\usepackage{amssymb}
\usepackage{bm}
\usepackage{amsmath}
\usepackage{dsfont}
\usepackage{booktabs}
\usepackage{epstopdf}
\usepackage{multirow}
\usepackage{comment}
\usepackage{endnotes}
\usepackage{hyperref}

\usepackage{bibentry}
\usepackage[normalem]{ulem}
\usepackage{graphicx}
\usepackage{enumerate}
\usepackage{enumitem}
\usepackage{subcaption}
\newtheorem{assumption}{Assumption}
\newtheorem{theorem}{Theorem}

\newtheorem{lemma}[theorem]{Lemma}
\newtheorem{remark}{Remark}

\usepackage{titlesec}
\usepackage{color, xspace}
\titleformat{\section}{\normalfont\Large\bfseries\MakeUppercase}{\thesection}{1em}{}



\begin{document}

\title{Measure Domain's Gap: A Similar Domain Selection Principle for Multi-Domain Recommendation}

\author{Yi Wen}
\authornote{This work was done when Yi Wen was a visiting student at Beihang University.}
\email{yiwen23-c@my.cityu.edu.hk}
\affiliation{%
  \institution{City University of Hong Kong}
  \city{Hong Kong SAR}
  \country{China}
}

\author{Yue Liu}
\email{yliu@u.nus.edu}
\affiliation{%
  \institution{National University of Singapore}
  \country{Singapore}}

\author{Derong Xu}
\email{derongxu2-c@my.cityu.edu.hk}
\affiliation{%
  \institution{City University of Hong Kong}
  \city{Hong Kong SAR}
  \country{China}
}

\author{Huishi Luo}
\email{hsluo2000@buaa.edu.cn}
\affiliation{%
  \institution{Institute of Artificial Intelligence, Beihang University}
  \city{Beijing}
  \country{China}
}

\author{Pengyue Jia}
\email{jia.pengyue@my.cityu.edu.hk}
\affiliation{%
  \institution{City University of Hong Kong}
  \city{Hong Kong SAR}
  \country{China}
}

\author{Yiqing Wu}
\email{iwu_yiqing@163.com}
\affiliation{%
  \institution{Institute of Computing Technology,
Chinese Academic of Science}
  \city{Beijing}
  \country{China}
}

\author{Siwei Wang}
\email{wangsiwei13@nudt.edu.cn}
\affiliation{%
  \institution{Intelligent Game and Decision Lab}
  \city{Beijing}
  \country{China}
}

\author{Ke Liang}
\email{liangke200694@126.com}
\affiliation{%
  \institution{National University of Defense Technology}
  \city{Changsha}
  \country{China}
}

\author{Maolin Wang}
\authornotemark[2]
\email{morin.wang@my.cityu.edu.hk}
\affiliation{%
  \institution{City University of Hong Kong}
  \city{Hong Kong SAR}
  \country{China}
}

\author{Yiqi Wang}
\email{yiq@nudt.edu.cn}
\affiliation{%
  \institution{National University of Defense Technology}
  \city{Changsha}
  \country{China}
}

\author{Fuzhen Zhuang}
\authornote{Corresponding authors}
\email{zhuangfuzhen@buaa.edu.cn}
\affiliation{%
  \institution{Beihang University, State Key Laboratory of Complex \& Critical Software Environment}
  \city{Beijing}
  \country{China}
}

\author{Xiangyu Zhao}
\authornotemark[2]
\email{xianzhao@cityu.edu.hk}
\affiliation{%
  \institution{City University of Hong Kong}
  \city{Hong Kong SAR}
  \country{China}
}
\renewcommand{\shortauthors}{Wen et al.}

\begin{abstract}
Multi-Domain Recommendation (MDR) achieves the desirable recommendation performance by effectively utilizing the transfer information  across different domains. Despite the great success, most existing MDR methods adopt a single structure to transfer complex domain-shared knowledge. 
However, the beneficial transferring information should vary across different domains. When there is knowledge conflict between domains or a domain is of poor quality, unselectively leveraging information from all domains will lead to a serious Negative Transfer Problem (NTP). 
Therefore, how to effectively model the complex transfer relationships between domains to avoid NTP is still a direction worth exploring. To address these issues, we propose a simple and dynamic \textbf{S}imilar \textbf{D}omain \textbf{S}election \textbf{P}rinciple (SDSP) for multi-domain recommendation in this paper\footnote{\url{https://github.com/Applied-Machine-Learning-Lab/SDSP}}. SDSP presents the initial exploration of selecting suitable domain knowledge for each domain to alleviate NTP. Specifically, we propose a novel prototype-based domain distance measure to effectively model the complexity relationship between domains. Thereafter, 
 the proposed SDSP can dynamically find similar domains for each domain based on the supervised signals of the domain metrics and the unsupervised distance measure from the learned domain prototype. We emphasize that SDSP is a lightweight method that can be incorporated with existing MDR methods for better performance while not introducing excessive time overheads. To the best of our knowledge, it is the first solution that can explicitly measure domain-level gaps and dynamically select appropriate domains in the MDR field. Extensive experiments on three datasets demonstrate the effectiveness of our proposed method. 
\end{abstract}
\vspace{-10 pt}

\begin{CCSXML}
<ccs2012>
   <concept>
       <concept_id>10002951.10003317.10003347.10003350</concept_id>
       <concept_desc>Information systems~Recommender systems</concept_desc>
       <concept_significance>500</concept_significance>
       </concept>
 </ccs2012>
\end{CCSXML}

\ccsdesc[500]{Information systems~Recommender systems}

\keywords{Multi-domain Recommendation; Negative Transfer}

    \maketitle
\section{Introduction}
Recommendation systems \cite{kang2018self-seq-rec,zhou2018din-interest-rec,sun2019bert4rec-seq-rec, zhao2018deep-rl-rec, zhao2018recommendations-rec} have achieved remarkable success in a wide range of application fields \cite{ma2018entire} and have become the cornerstone of personalized user experience in industries such as e-commerce \cite{schafer2001commerce-rec}, streaming media services \cite{covington2016deep}, and online education \cite{urdaneta2021recommendation-education-rec}. However, traditional recommendation methods \cite{wang2017deep&cross-rec, guo2017deepfm-rec,cheng2016wide&deep-rec,guan2022cross} can only handle data from a single domain, whereas in the real world, we often need to handle multi-domain data due to the complexity of business needs \cite{xia2021knowledge}. Therefore, it is necessary to model user behavior and item attributes for each domain to capture their different characteristics effectively. However, owing to the considerable disparity in data distribution across different domains, simply concatenating multi-domain data can introduce the domain bias problem \cite{liu2023deep,ni2019justifying,yang2024generate}, and modeling multi-domain data separately imposes significant overheads and maintenance costs, which are unacceptable in the industrial field \cite{he2020lightgcn-graph-rec}. Therefore, how to design a model that can concurrently cope with multi-domain data has become a crucial factor in solving the multi-domain problem.

To address these challenges, Multi-Domain Recommendation (MDR) methods \cite{li2023hamur-multi-domain-rec,park2024pacer-cross-domain-rec,li2023adl} have emerged as a promising approach. By migrating knowledge from multiple domains, the MDR approaches significantly improve the recommendation performance and alleviate the cold-start and data sparsity problems that are difficult to deal with in traditional recommendations. Dynamic Weights (DW) \cite{yan2022apg-rec, bian2020can-rec} and Share-Specific (SS) based methods \cite{tang2020ple-multi-task-rec,park2024pacer-cross-domain-rec}, the two most representative classes of methods in the MDR field, have received increasing attention recently. DW-based approaches input scenario-sensitive features into the network to generate parameters for the main network. SS-based approaches often divide the network or parameters into domain-specific and domain-shared parts to capture shared and specific information among domains.

Despite the significant success achieved, DW and SS-based methods still suffer some unsatisfactory factors. Firstly, DW-based methods rely on a strong prior knowledge of the domains to manually select scenario-sensitive features, making them less generalizable when new data or domains are encountered. Furthermore, SS-based approaches usually design a single domain-shared structure to learn shared knowledge. However, the relationship between domains can be complex, and shared knowledge that is beneficial to one domain may not be suitable for another domain. Inputting the same knowledge into each domain without filtering will lead to severe performance degradation, which is known as Negative Transfer Problems (NTP) \cite{zhang2022survey, standley2020tasks-multi-task, park2024pacer-cross-domain-rec,li2024aiming-cross-domain-rec}. In addition, some methods only consider raw data features to generate weight vectors or network parameters that control the model, ignoring the fact that the actual learning ability of the model differs from domain to domain, and such inaccuracy may lead to incorrect direction of gradient descent, which results in sub-optimal performance.

\begin{figure}[t]
\includegraphics[width=1\linewidth]{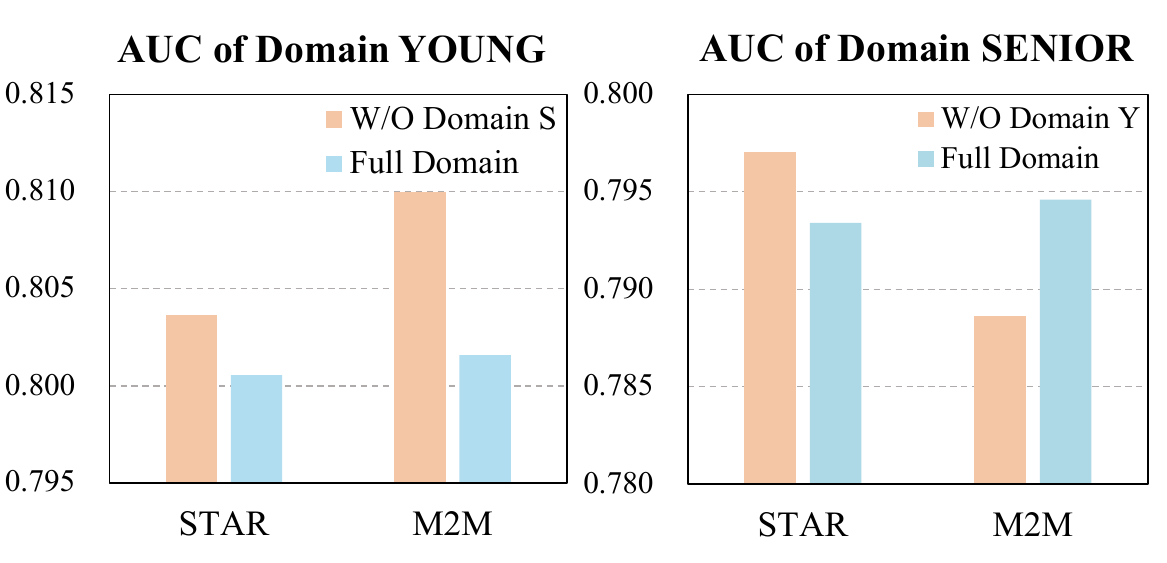}
\caption{The performance of STAR and M2M with selected different domains on Movielens Dataset. 
 ``S'' and ``Y'' are abbreviations for the SENIOR and YOUNG, respectively.}
\label{motivation}
\end{figure}

In order to experimentally validate the above issues on the multi-domain field, we report the domain transfer experiment performance of two classical multi-domain algorithms STAR \cite{sheng2021star-multi-domain-rec} and M2M \cite{zhang2022m2m-multi-domain-multi-task-rec} on the Movielens dataset. Specifically, we report the performance variations of each domain with different selected domains. Movielens dataset describes people's preferences for movies, and the user characteristic $`$age' is utilized to categorize the dataset into three different domains: YOUNG, MIDDLE, and SENIOR. As can be seen  in Figure \ref{motivation}, we can get three key conclusions: i) \uline{The negative transfer problem between domains exists}. When introducing data from domain SENIOR, there is a significant negative transfer of performance in domain YOUNG. This result is not difficult to understand, as teenagers' preferences for films and older adults are usually quite different. ii) \uline{Relationships between domains are not always symmetrical}. In terms of  M2M, domain SENIOR and domain YOUNG show opposite transfer effects to each other, which inspires us that the relationships between domains are not symmetrical, i.e., a positive transfer phenomenon from domain A to domain B does not indicate that domain B has a positive effect to domain A.  iii) \uline{The relationship between domains varies with the learning ability of the model}. For domain SENIOR, when domain YOUNG is introduced, the two models show opposite transfer effects.

To tackle these limitations, we propose a dynamic Similar Domain Selection Principle (SDSP). SDSP presents the initial exploration of selecting suitable domain knowledge for each domain to alleviate NTP. Specifically, we propose a prototype-based domain distance measure to effectively model the complexity relationship between domains. Thereafter, SDSP can dynamically find similar domains for each domain based on the supervised signals of the domain metrics and the unsupervised distance measure from the learned domain prototype. We emphasize that SDSP is a lightweight method that can be incorporated with most existing MDR methods for better performance while not introducing excessive time overheads. To the best of our knowledge, it is the first solution that can explicitly measure domain-level gaps and dynamically select appropriate domains in the MDR field. Moreover, we theoretically analyze the reasons for negative transfer of the multi-domain approachs. Our contributions are summarized as follows:
\begin{itemize} [leftmargin=*]
\item We propose a novel prototype-based domain distance measure that can efficiently model the complex relationship between domains without additional feature engineering.
    \item We introduce a domain selection principle (SDSP), leveraging both the supervised signals and the unsupervised distance measure to select beneficial domains. Besides, SDSP can be integrated into existing MDR methods to improve their performance.
    \item Extensive experiments on three datasets demonstrate the effectiveness and efficiency of our proposed method.
\end{itemize}

\section{Problem Statement}

\subsection{Multi-Domain Recommendation}

In this paper, we focus on multi-domain recommendations for predicting Click-Through Rate (CTR) tasks. The aim of the CTR task is to predict the probability that a user will click on an item based on data from multiple domains. Denote $\mathcal{D}=\left\{1, 2, \dots, D\right\}$ as $D$ distinct domains. The goal is to learn a function $f^d\left(\cdot\right)$ that outputs the predicted CTR rate for each domain $d$:\ $ \hat{y}^d=f^d(\mathbf{x}^d; \Theta^d)$, where $\hat{y}^d \in[0,1]$ denotes the predicted click probability in domain $d$, $\mathbf{x}^d$ denotes the data features of domain $d$ which may include: user features, item features, text features, etc, and  $\Theta^d$ is the network parameter for domain $d$.

Given a training dataset $\mathcal{T}_d=\{(\mathbf{x}_i,y^d_i)\}_{i=1}^{n_d}$ for domain $d$, where $n_d$ is the number of data in domain $d$ and $y^d_i \in\{0,1\}$ is the true labels (1 for clicks and 0 for no clicks). To minimize the loss, the binary cross-entropy loss is usually adopted. Specifically, the CTR loss can be formulated as:
\begin{equation}
\label{cross-entropy}
  \mathcal{L}_{ctr}= - \sum_{d=1}^D \sum_{i=1}^{n_d}\left[y^d_i \log \hat{y}^d_i+\left(1-y_i^d\right) \log \left(1-\hat{y}^d_i\right)\right]
\end{equation}

\subsection{Domain Selection}
Shared network structure, an effective component for transferring knowledge from different domains, is widely used in numerous MDR approaches. For example, the expert networks in MMOE \cite{ma2018mmoe-multi-task-rec}, domain-shared experts in PLE \cite{tang2020ple-multi-task-rec}, and expert generators in M2M \cite{zhang2022m2m-multi-domain-multi-task-rec}. We reflect this domain dependency in the parameter $\Theta$, denoted as $\Theta_\mathcal{D}$, which means the parameter $\Theta$ would be influenced by the data in all domains $\mathcal{D}$. 

While such a structure's effectiveness has been demonstrated, most existing MDR approaches employ a single shared structure to transfer knowledge across all domains. However, the beneficial transfer information should vary across different domains, and unselectively leveraging information from all domains will lead to the NTP. The domain selection strategy is proposed to decouple the single shared structure. Specifically, domain selection tries to find a beneficial domain set for each domain to avoid the NTP caused by large domain gaps or noise information. Without loss of generality, taking domain $d$ as an example, we would like to find the most suitable subset of domains $\mathcal{S}^d \subseteq \mathcal{D}$ ($d \in \mathcal{S}^d$) for it, which will improve the performance of domain $d$ as much as possible while reducing the negative transfer of other domains to it.

\begin{figure*}[t] 
\centering 
\includegraphics[width=0.98\textwidth]{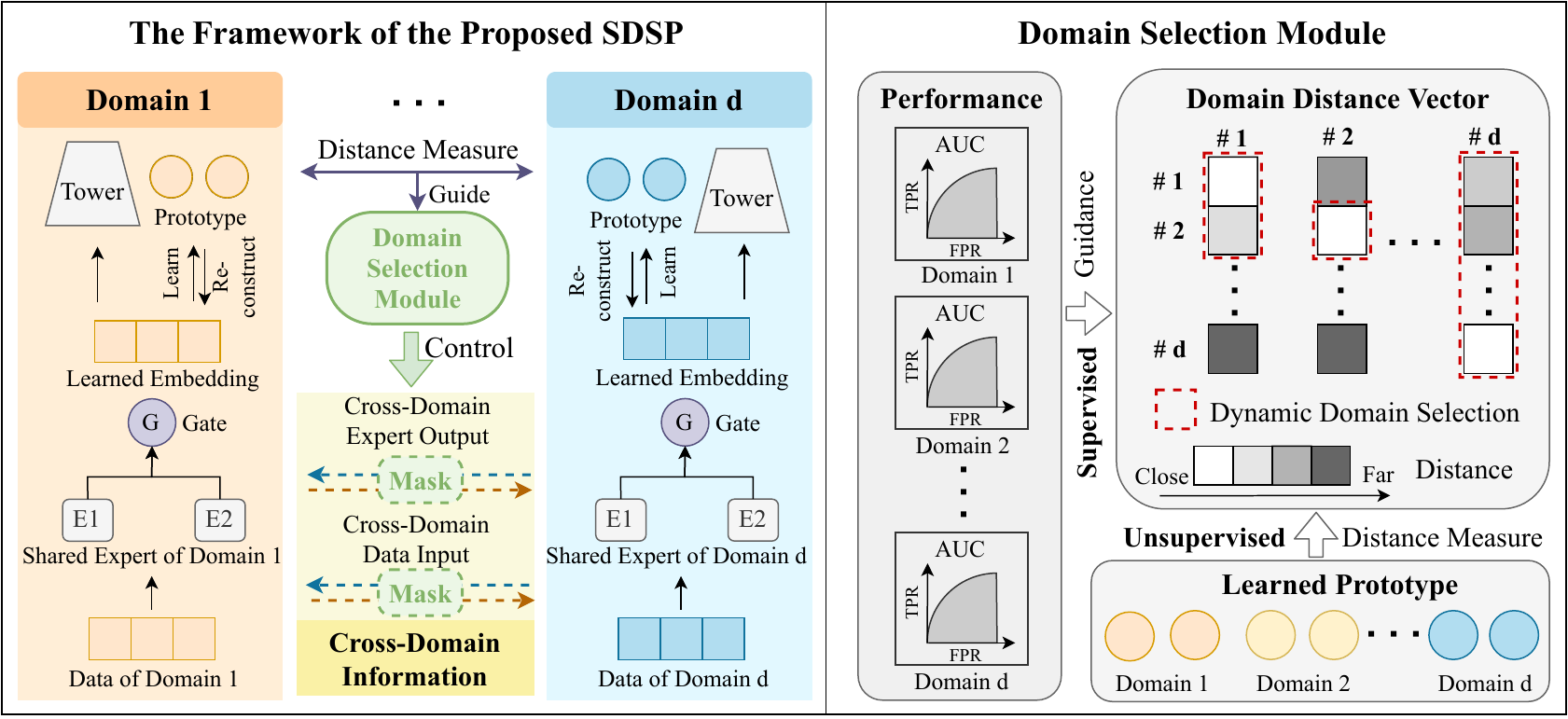} 
\caption{The framework of the proposed SDSP. } 
\label{framework} 
\vspace{-1em}
\end{figure*}

\section{Method}
\subsection{Motivation}
As mentioned above, we attempted to use the domain selection strategy to alleviate the negative transfer problem (NTR) in the MDR field. However, addressing the domain selection problem is not an easy task. 
Firstly, finding the optimal solution among all combinations has been proven to be an NP-hard problem \cite{standley2020tasks-multi-task}, and manually selecting similar domains as hyper-parameters incurs additional time overheads. Besides, there is no prior knowledge to model the complex relationships between domains, so it is highly necessary to propose an effective domain selection algorithm.

As discussed in Section \ref{select}, the current selection methods proposed often adopt the greedy-based algorithm and require trial runs on additional combinations, which imposes huge time overhead and is not suitable in time-sensitive recommendation fields because excessive latency can lead to user attrition. Recently, Bai et al. \cite{bai2022saliency-multi-task}  novelly utilize the gradient of tasks to model task relationships. Similarly, some recent methods \cite{luo2024one-AREAD-multi-domain-rec, sun2020learning,dom_sim} attempt to use pre-cluster to decouple the domain relations. However, both the above methods depend on the simplified relation modeling assumption, which ignores the asymmetry of domain relationships in real scenarios \cite{sun2020learning}. As we illustrate in the INTRODUCTION section, domain A has a negative transfer impact on domain B, but it does not always indicate that domain B has a negative transfer impact on domain A.

To tackle these issues, we proposed a similar domain principle (SDSP) for Multi-Domain Recommendation. As shown in Figure \ref{framework}, our proposed SDSP consists of three components: flexible domain representation learning, prototype-based domain distance measure, and dynamic similar domain selection principle.

\subsection{Flexible Domain Representation Learning}
 Without loss of generality, we adopt a basic Multi-gate Mixture-of-Experts (MMOE) structure as an example to learn domain-specific representations in this paper. Because numerous MDR approaches adopt such a paradigm \cite{tang2020ple-multi-task-rec,shen2021sarnet} as a shared structure to transfer knowledge between domains. Formally, the output for domain $d$ is defined as:
\begin{equation}
    \boldsymbol{y}^d={tower}^d\left(\boldsymbol{h}^d\right), 
\end{equation}
where  ${tower}^d(\cdot)$ is the domain-specific tower network. $\boldsymbol{h}^d$ is learned embedding in domain $d$. Besides,  $\boldsymbol{h}^d$ is calculated by:
\begin{align}
\label{represent}
        & \boldsymbol{h}^d =   \boldsymbol{g}^d \boldsymbol{e}(\boldsymbol{x}^d), \\
    & \boldsymbol{g}^d = \text{softmax}(\boldsymbol{W}_{g}^d\boldsymbol{x}^d), \label{gate}
\end{align}
where $\boldsymbol{x}^d$ is the input representation in domain $d$,  $\boldsymbol{e}(\cdot)$ is the representation set of the whole expert network, $\boldsymbol{g}^d \in \mathbb{R}^{E }$ is the weight generated by the gating network for domain $d$, and  $E = \sum_{i = 1}^D E_d$, where $E_d$ is the number of expert in domain $d$.  

Slightly different from MMOE, we decouple the expert network shared by all domains and design a specialized shared expert network for each domain. This partitioning process is somewhat similar to PLE's specific expert network, but we only decouple the shared expert network part. Specifically, 
\begin{equation}
\label{expert}
    \boldsymbol{e}(\boldsymbol{x}^d) = \left[ \boldsymbol{e}_i \right]_{i=1}^E= \left[ \boldsymbol{e}_1^1, ... ,\boldsymbol{e}_{E_1}^1, ... , \boldsymbol{e}_1^D, ... ,\boldsymbol{e}_{E_D}^D  \right],
\end{equation}
where $\boldsymbol{e}_j^i$  denotes the $j$-th expert network in $i$-th domain, and $E_d$ is the number of expert in domain $d$.

No additional performance-enhancing mechanisms are employed to highlight the effectiveness of the domain selection strategy. In addition, we emphasize that the above learning approach is flexible and can be replaced with most existing MDR architectures, thereby achieving a better performance.

As mentioned before,  blindly using information from all domains will result in serious NTP. In the later subsection, we will introduce how to use the domain selection mechanism to alleviate the NTP.

\subsection{Prototype-based Domain Distance Measure}
\label{proto_measure}
To measure the distance between domains and consider the learning ability of the model, an intuitive approach is to compute the similarity between domain-specific representations, i.e., calculating $dis(\boldsymbol{h}^{d}, \boldsymbol{h}^{d'})$, where $\boldsymbol{h}^d$ is the learned representation in $d$-th domain. However, such an approach would incur a huge time overhead of $\mathcal{O}(N^2)$, where $N$ is the number of samples. Even in the case of batch processing, the time complexity is still $\mathcal{O}(B^2)$, where $B$ is the batch size. This is unacceptable for the latency-sensitive recommendation field. In addition, unselectively measuring all data will inevitably be interfered with by low-quality or noisy data.

In order to tackle these issues, we introduce the concept of prototype learning to measure the distance between domains. The definition of the prototype is ``a representative embedding for a group of instances'' \cite{liprototypical}. With the learned prototype, we can effectively reduce the time overhead and alleviate the interference of low-quality or noisy data \cite{gan2024peace}. Specifically, we devise a prototype learning encoder to learn the fine-grained domain-level prototype: 
\begin{equation}
    \boldsymbol{p}^d = P-Encoder(\boldsymbol{h}^d),
\end{equation}
where $\boldsymbol{p}^d \in \mathbb{R}^{M \times D}$ is learned prototype in domain $d$, and $\boldsymbol{h}^d \in \mathbb{R}^{B \times D}$ is the learned representation in $d$-th domain defined in Eq. (\ref{represent}), with the batch training setup. $M$, $D$, and $B$ are the number of prototypes, feature dimension, and batch size, respectively. 

Besides, we devise a prototype decoder and a reconstruction loss to ensure the high quality of the learned prototype. Specifically, we consider the principle that the learned prototype is sufficiently representative if it can effectively decode the original embedding.
\begin{align}
    & \hat{\boldsymbol{h}}^d = P-Decoder(\boldsymbol{p}^d) \\
    & \mathcal{L}_{rec} = \sum_{d=1}^D \Vert \boldsymbol{h}^d - \hat{\boldsymbol{h}}^d \Vert_2^2 \label{rec_loss}
\end{align}
where $\hat{\boldsymbol{h}}^d$ is the recontructed sample embedding in domain $d$.

After obtaining the domain-level prototype \cite{ysj, wxh1}, the next step is to compute the domain-to-domain distance. Referring to the conception of segmented-path distance defined in the cluster measure field~\cite{besse2016review}, we propose the following prototype-based domain distance measure: 
\begin{definition} (\textbf{Distance between Prototype and Domain})
The distance from the prototype $\boldsymbol{p}$  to the domain $d$ is defined as the distance from the prototype $\boldsymbol{p}$ to the nearest prototype in the domain $d$, i.e., 
$$dis(\boldsymbol{p}, d) = \min_{i \in \{1, 2, \dots, M\}} dis(\boldsymbol{p}, \boldsymbol{p}^{d}_i),$$
where $\boldsymbol{p}^{d}_i$ represents the $i$-th prototype in domain $d$, and $M$ is the number of prototypes. In our experiment, the distance $dis(\boldsymbol{p}, \boldsymbol{p}^{d}_i)$ is calculated by the L2 norm between the $\boldsymbol{p}$ and $\boldsymbol{p}^{d}_i$.
\end{definition}

\begin{definition} (\textbf{Distance between Domain and Domain})
The distance from domain $d_1$ to domain $d_2$ is defined as the average of the distances from all prototypes in domain $d_1$ to domain $d_2$, i.e., $$dis(d_1, d_2) = \frac{1}{M}\sum_{i=1}^M dis(\boldsymbol{p}^{d_1}_i, d_2). $$
\end{definition}

\begin{remark}
Although the defined measure seems simple, we emphasize its superiority as follows: 
\begin{itemize}[leftmargin=*]
\item \textbf{Efficient.} \ With the learned prototype, the time overhead of the domain measure part would decrease effectively from $\mathcal{O}(B^2)$ to $\mathcal{O}(BM+M^2)$, where   $B$, $M$ is the number of batch size and prototype. In our setting, $M \ll B$, thus the prototype-based measure can effectively save the time overhead of the algorithm.
\item \textbf{Asymmetrical.} \ The distance defined is asymmetric. Specifically, $dis(d_1, d_2) \neq dis( d_2, d_1)$ in most settings,  which satisfies the need for multi-domain relationship modeling.
As mentioned in the INTRODUCTION section, the relationships between domains are not symmetrical, i.e., a positive transfer phenomenon from domain A to domain B does not indicate that domain B has a positive transfer effect to domain A. 
\item \textbf{Model-Aware.} \ Since the prototype is learned from the model's representation rather than the raw data, the prototype can be better aware of the model's current learning capacity and thus adaptively measure the differences between domains.
\end{itemize}
\end{remark}

\begin{remark}
    With the calculated domain distance, the combination of similar domains we need to search is drastically reduced from $\mathcal{O}(2^{D-1})$ to $\mathcal{O}(D)$,  where $D$ is the number of domains. Taking the Movielens dataset as an example, if the distance from other domains to domain YOUNG is known to be sorted as YOUNG < MIDDLE < SENIOR, there are only three states that we need to discuss for similar domains of domain YOUNG, i.e., $[Y]$, $[Y, M]$, and $[Y, M, S]$. The more the number of domains, the more the number of states is reduced.
\end{remark}
\begin{algorithm}[t]
\caption{Similar Domain Selection Principle (SDSP)}
\begin{algorithmic}[1]
    \REQUIRE Decay Rate $d\_r$ 
    \ENSURE Similar Domain Set $ \{ \mathcal{S}^d \}_{d=1}^D$, Prediction Click Label $\hat{y}$  
\STATE \textbf{Initialize:} Probability $ p = 1 $, Selection Loop $l = 2$ 
      \WHILE{not converged}
\STATE Training model with training data $\mathcal{T}$ and learn prototype $\boldsymbol{p}$
\IF{$iter$ \% $l$ == 0}
\STATE Measure domain distance with learned prototype $\boldsymbol{p}$
\STATE Calculate metric $\{\boldsymbol{r^d}\}_{d=1}^D$ with validation data $\mathcal{V}$
\FOR{d = 1 to D}
\STATE Update value function of state $\mathcal{S}^d$ using metric $\boldsymbol{r}^d$
\STATE Generate a random value $rnd \in [0, 1] $
        \IF{$ rnd \leq p $}
            \STATE Randomly select a similar domain set for $ \mathcal{S}^d $
        \ELSE
            \STATE Select best domain set for $ \mathcal{S}^d $  based on value function
        \ENDIF
\ENDFOR
\STATE Update masks $ \{ \boldsymbol{M}^d \}_{d=1}^D$ using similar domain $ \{ \mathcal{S}^d \}_{d=1}^D$ 
\STATE Update probability $p \leftarrow p \times d\_r$ 
\ENDIF
\STATE Update iteration $iter \leftarrow iter + 1$ 
\ENDWHILE
\end{algorithmic}
\end{algorithm}

\subsection{Dynamic Similar Domain Selection}
Although the number of states to be considered is significantly reduced after the intra-domain distance measure, it is still a difficult problem to select its beneficial domains for each domain because there is no prior knowledge for us to model inter-domain relationships. To address these issues, we model the domain selection problem as a Multi-Armed Bandit Problem and use the classical Epsilon-Greedy (EG) algorithm to solve it. 

EG is a classical selection algorithm widely used in deep learning, which combines random and greedy algorithms to deal with exploration and exploitation dilemmas. The main idea of EG is to control the usage of greedy or random algorithms by a small probability. Specifically, the Epsilon-greedy method selects an action at a decision point by first generating a random value $rnd$ in the interval [0,1], and if $rnd$ is greater than Epsilon, the Epsilon-greedy method uses the greedy algorithm to select the action. Otherwise, it uses the random algorithm to select the action \cite{liu2022understanding}.

We add a decay mechanism to the Epsilon-greedy algorithm, expecting the model to explore as many new states as possible in the early stages of training while focusing more on learning the optimal states that have been found in the later stages of training. State Space $\mathcal{S}$ is a discrete state space, which consists of each combination of final domain selection. The value function for each domain is defined as the domain metric (e.g., AUC).

After obtaining the similar domain set $\mathcal{S}^d$ for each domain $d$, we can revise Eq. (\ref{represent}) through the mask mechanism to avoid the negative transfer effect across domains:
 \begin{align}
     & \boldsymbol{h}^d_{revised} = \boldsymbol{g}_{mask}^d(\boldsymbol{x}^d) \boldsymbol{e}(\boldsymbol{x}^d) \\
     &\boldsymbol{g}_{mask}^d(\boldsymbol{x}^d)= \text{softmax} (\boldsymbol{g}^d + \mathcal{M}^d) 
 \end{align}
where $\boldsymbol{g}^d$ is calculated by Eq. (\ref{gate}),  and $\mathcal{M}^d \in \mathbb{R}^{E}$ is the $d$-th domain's mask and is defined by
\begin{equation}
    \mathcal{M}^d_i  = \begin{cases} 0, & \text { if } d'_i \in \mathcal{S}^d, \\  -\infty, & \text { otherwise. }\end{cases}
\end{equation}
where $\mathcal{M}^d_i$ is the $i$-th value of the $\mathcal{M}^d$ and $d'_i$ is the corresponding domain of $\boldsymbol{e}_i$ defined in Eq. (\ref{expert}).

\subsection{Overall Loss Function}
The overall optimization goal of our model is:
\begin{equation}
    \mathcal{L}_{final} = \mathcal{L}_{ctr} + \gamma \mathcal{L}_{rec}
\end{equation}
where $\mathcal{L}_{ctr}$ is the cross-entropy loss function and $\gamma$ is the prototype learning hyper-parameter.  $\mathcal{L}_{rec}$ is defined by Eq. (\ref{rec_loss}),  and $\mathcal{L}_{ctr}$ can be calculated by $- \sum_{d,i}[y^d_i \log \hat{y}^d_i+(1-y_i^d) \log (1-\hat{y}^d_i)]$, $\hat{y}_i^d$ and $y_i^d$ denote the prediction and the ground truth in the $d$-th domain, respectively.

\section{Theoretical Analysis}
In this section, we attempt to theoretically analyze the reasons for negative inter-domain transfer. 

For convenience, we give the following notations: $\mathcal{X} = \{\mathbf{X}^1, \ldots \mathbf{X}^D\}$ represents multi-domain data, where $\mathbf{X}^i$ contains the user information and the corresponding item or contextual information in the $i$-th domain, $D$ is the number of domain. $\mathcal{Y} = \{\mathbf{Y}^1, \ldots \mathbf{Y}^D\}$  represents the interaction record set in different domains. $\mathcal{D} = \{1, \ldots, D\}$ is the domain number set. To simplify the following deduction, we assume $(\mathcal{X}, \mathcal{Y})$ follows the underlying joint distribution $p$, which is a common assumption of theorem analysis. 

Considering two random domains data $\mathbf{X}^i$ and $\mathbf{X}^j$, $i \neq j$, we define $I(\mathbf{X}^i ; \mathbf{X}^j)$ to be the shannon mutual information between $\mathbf{X}^i$ and $\mathbf{X}^j$. Similarly, we use $H(\mathbf{Y}^i \mid \mathbf{X}^i, \mathbf{X}^j)$ to denote the conditional entropy of $\mathbf{Y}^i$ given the two modalities as input. Following common practice, for recommendation tasks, $\ell_{\mathrm{CE}}(\hat{y}, y)$ is the cross-entropy loss between the prediction $\hat{y}$ and the ground-truth transaction $y$. 

\begin{assumption}
\textbf{\text{(Optimality)}} The domain contributes the most to the $i$-th domain's interaction is exactly $i$-th, i.e., $i = \arg \max_jI(\mathbf{X}^j; \mathbf{Y}^i)$. 
\end{assumption}
\begin{remark}
   The above assumption happens frequently in real scenarios. Without the target domain's information, it would be extremely hard to predict its interaction. 
\end{remark}

\begin{theorem}
\label{select_theorem}
Denote $\mathbf{Z}^i$ is the $i$-th domain's embedding learned from the original multi-domain data $\mathcal{X}$. If there exist a subset $\mathcal{S}^i \subseteq \mathcal{D}$ such that
$\mathbf{Z}^i = g^j(\mathbf{X}^j)$, $ \forall j \in \mathcal{S}^i$, where $g^j$ can be any function. With the optimality assumption, we have the following conclusion: 
$$
 \inf _h \mathbb{E}_p[\ell_{\mathrm{CE}}(h(\mathbf{Z}^i), \mathbf{Y}_i)]-\inf _{h^{\prime}} \mathbb{E}_p[\ell_{\mathrm{CE}}(h^{\prime}(\mathbf{X}^1,\ldots, \mathbf{X}^D), \mathbf{Y}_i)] \geq \Delta_p^i,
$$ 
where  $\Delta_p^i = I(X_i; Y_i)-\min_{j \in \mathcal{S}^i} I(X^j; Y_i)$, $ i \in \{1, \ldots, D\}$
\end{theorem}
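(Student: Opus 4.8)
The plan is to bound the two infima separately. For the second term, I would use the fact that the Bayes-optimal predictor of $\mathbf{Y}_i$ from the full data $(\mathbf{X}^1,\ldots,\mathbf{X}^D)$ achieves cross-entropy equal to the conditional entropy $H(\mathbf{Y}_i \mid \mathbf{X}^1,\ldots,\mathbf{X}^D)$; this is the standard identity that the infimum of expected cross-entropy over all measurable $h'$ equals the conditional entropy of the label given the inputs. Since $\mathbf{X}^i$ is one of the inputs, and by the optimality assumption $i = \arg\max_j I(\mathbf{X}^j;\mathbf{Y}^i)$, I get $H(\mathbf{Y}_i \mid \mathbf{X}^1,\ldots,\mathbf{X}^D) \le H(\mathbf{Y}_i \mid \mathbf{X}^i) = H(\mathbf{Y}_i) - I(\mathbf{X}^i;\mathbf{Y}_i)$.

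For the first term, the key observation is that $\mathbf{Z}^i$ is a deterministic function $g^j$ of $\mathbf{X}^j$ for every $j \in \mathcal{S}^i$, so $\mathbf{Z}^i$ is a common (coarser) function: predicting $\mathbf{Y}_i$ from $\mathbf{Z}^i$ can be no better than predicting it from $\mathbf{X}^j$ for the best such $j$. Concretely, $\inf_h \mathbb{E}_p[\ell_{\mathrm{CE}}(h(\mathbf{Z}^i),\mathbf{Y}_i)] = H(\mathbf{Y}_i \mid \mathbf{Z}^i) \ge H(\mathbf{Y}_i \mid \mathbf{X}^j)$ by the data-processing inequality, since $\mathbf{Y}_i \to \mathbf{X}^j \to \mathbf{Z}^i$ forms a Markov chain (for each fixed $j \in \mathcal{S}^i$); taking the best $j$ gives $H(\mathbf{Y}_i \mid \mathbf{Z}^i) \ge \max_{j \in \mathcal{S}^i} H(\mathbf{Y}_i \mid \mathbf{X}^j) = H(\mathbf{Y}_i) - \min_{j \in \mathcal{S}^i} I(\mathbf{X}^j;\mathbf{Y}_i)$.

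Subtracting the two bounds, the $H(\mathbf{Y}_i)$ terms cancel and I am left with
$$
\inf_h \mathbb{E}_p[\ell_{\mathrm{CE}}(h(\mathbf{Z}^i),\mathbf{Y}_i)] - \inf_{h'}\mathbb{E}_p[\ell_{\mathrm{CE}}(h'(\mathbf{X}^1,\ldots,\mathbf{X}^D),\mathbf{Y}_i)] \ge \bigl(H(\mathbf{Y}_i) - \min_{j\in\mathcal{S}^i} I(\mathbf{X}^j;\mathbf{Y}_i)\bigr) - \bigl(H(\mathbf{Y}_i) - I(\mathbf{X}^i;\mathbf{Y}_i)\bigr) = I(\mathbf{X}^i;\mathbf{Y}_i) - \min_{j\in\mathcal{S}^i} I(\mathbf{X}^j;\mathbf{Y}_i) = \Delta_p^i,
$$
which is the claimed bound (identifying $X_i$ with $\mathbf{X}^i$ and $Y_i$ with $\mathbf{Y}^i$ in the notation of the statement).

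\textbf{Main obstacle.} The routine parts are the two information-theoretic identities; the delicate point is justifying $\inf_h \mathbb{E}_p[\ell_{\mathrm{CE}}(h(\mathbf{Z}),\mathbf{Y})] = H(\mathbf{Y}\mid\mathbf{Z})$ rigorously — this requires that $h$ ranges over all (measurable) conditional-probability predictors and that the minimizer $h(\mathbf{z}) = p(\cdot \mid \mathbf{Z}=\mathbf{z})$ is admissible, plus a mild integrability/finiteness condition so the entropies are well-defined. One should also state clearly that the Markov chain $\mathbf{Y}_i \to \mathbf{X}^j \to \mathbf{Z}^i$ holds because $\mathbf{Z}^i = g^j(\mathbf{X}^j)$ is a deterministic function of $\mathbf{X}^j$ alone (this is exactly the hypothesis $\mathbf{Z}^i = g^j(\mathbf{X}^j)\ \forall j\in\mathcal{S}^i$), which licenses the data-processing inequality $I(\mathbf{X}^j;\mathbf{Y}_i) \ge I(\mathbf{Z}^i;\mathbf{Y}_i)$, equivalently $H(\mathbf{Y}_i\mid\mathbf{X}^j)\le H(\mathbf{Y}_i\mid\mathbf{Z}^i)$. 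I would fold these technical caveats into a short remark rather than belabor them in the main proof.
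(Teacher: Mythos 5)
Your proposal is correct and follows essentially the same route as the paper: the paper's Appendix proves exactly the variational identity $\inf_f \mathbb{E}_p[\ell_{\mathrm{CE}}(f(\mathbf{X}^\mathbf{D}),\mathbf{Y})]=H(\mathbf{Y}\mid\mathbf{X}^\mathbf{D})$ as a lemma and then applies the data-processing inequality to $\mathbf{Z}^i=g^j(\mathbf{X}^j)$ together with the optimality assumption, just as you do. The only cosmetic difference is that the paper rewrites the gap as $I(\mathbf{X}^1,\ldots,\mathbf{X}^D;\mathbf{Y}^i)-I(\mathbf{Z}^i;\mathbf{Y}^i)$ and bounds that difference, whereas you bound the two conditional entropies separately and cancel $H(\mathbf{Y}_i)$ — the same argument in a different order.
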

The term $\inf _h \mathbb{E}_p[\ell_{\mathrm{CE}}(h(\cdot), \mathbf{Y}_i)]$ reflects the best function $f$ that could be found to minimize the generalization cross-entropy loss in the $i$-th domains. Due to the space limit, the proof of  Theorem \ref{select_theorem}  would be shown in Appendix \ref{appendix_proof}. 

\begin{remark}
From the above theorem, we have the following insights: The lower bound of the information gap depends on the domain that contributes the least to the $i$-th domain. If $Z^i$ learns too much meaningless information from domains of poorer quality, it can lead to serious negative transfer, which inspires us that decoupling the shared structure with less informative domains could better decrease the domain gap, thereby alleviating the negative transfer problem.
\end{remark}
\section{Experiments}
In this section, we conduct extensive experiments to verify the effectiveness of the proposed method. The experiments are designed to answer the following questions:

\noindent
\textbf{(RQ1):} Does the performance of proposed SDSP exceed the exsiting state-of-the-art methods?

\noindent
\textbf{(RQ2):} Can the proposed selection principle apply other MDR model to improve their performance?

\noindent
\textbf{(RQ3):} How do hyper-parameters influence the proposed model?

\noindent
\textbf{(RQ4):} Do the proposed two sub-modules contribute to the AUC?

\noindent
\textbf{(RQ5):} Is the proposed SDSP helpful in alleviating the NTP?

\noindent
\textbf{(RQ6):} Is the SDSP an efficient module?

\begin{table}[t]
\caption{The statistics of the evaluation datasets.}
\vspace{-1em}
\label{dataset}
\begin{tabular}{ccccc}
\toprule
Dataset   & \#Domain & \#Users & \# Items & \#Interactions \\ \midrule
Movielens & 3        & 6,040    & 10,413    & 800,167         \\
Amazon    & 3        & 100,345  & 53,665    & 658,827         \\
Douban    & 3        & 6,671    & 184,619   & 1,348,399   \\ \bottomrule
\end{tabular}
\end{table}

\begin{table*}[t]
\fontsize{8}{9.7}\selectfont 
\centering
\caption{Empirical evaluation and comparison of SDSP with ten baseline methods on three datasets. $``*"$ indicates the statistically significant
improvements (i.e., two-sided t-test with $p < 0.05$) over the best baseline. $\uparrow$ : higher is better; $\downarrow:$ lower is better.}
\label{main_exp}
\resizebox{\linewidth}{!}{
\renewcommand\arraystretch{1}
\tabcolsep=0.1cm
\scalebox{1.07}{
\begin{tabular}{cccccccccc||cccccc} \toprule
\multirow{3}{*}{Method} & \multicolumn{9}{c||}{Performance for Each   Domain (AUC $\uparrow$)}                                                                                             & \multicolumn{6}{c}{Overall Performance}                                                                                                                                              \\ \cmidrule{2-16} 
    & \multicolumn{3}{c}{Movielens}                       & \multicolumn{3}{c}{Amazon}                          & \multicolumn{3}{c||}{Douban}                         & \multicolumn{2}{c}{Movielens}                              & \multicolumn{2}{c}{Amazon}                                 & \multicolumn{2}{c}{Douban}                                 \\ \cmidrule{2-16} 
                 & D1              & D2              & D3              & D1              & D2              & D3              & D1              & D2              & D3              & AUC $\uparrow$  & \multicolumn{1}{l}{Logloss $\downarrow$} & AUC $\uparrow$  & \multicolumn{1}{l}{Logloss $\downarrow$} & AUC $\uparrow$  & \multicolumn{1}{l}{Logloss $\downarrow$} \\ \midrule
SharedBottom & 0.8082 & 0.8055 & 0.7954 & 0.6998 & 0.6782 & 0.7322 & 0.7130 & \textbf{0.7366} & 0.8117 & 0.8024 & 0.5348 & 0.7058 & 0.4910 & 0.7992 & 0.5179 \\
MMOE         & 0.8114 & 0.8122 & 0.7998 & 0.7005 & 0.6826 & 0.7323 & 0.7082 & 0.7250 & 0.8109 & 0.8077 & 0.5238 & 0.7073 & 0.4915 & 0.7974 & 0.5200 \\
PLE          & 0.8105 & 0.8134 & 0.8014 & 0.7025 & 0.6796 & 0.7324 & 0.7096 & 0.7305 & 0.8104 & 0.8085 & 0.5230 & 0.7075 & 0.4910 & 0.7976 & 0.5202 \\
STAR         & 0.8005 & 0.8041 & 0.7934 & 0.6739 & 0.6462 & 0.6969 & 0.7080 & 0.7237 & 0.8083 & 0.7996 & 0.5344 & 0.6747 & 0.4980 & 0.7952 & 0.5225 \\
SAR-Net      & 0.8123 & 0.8134 & 0.8004 & 0.6978 & 0.6785 & 0.7239 & 0.7163 & 0.7345 & 0.8117 & 0.8087 & 0.5244 & 0.7020 & 0.4954 & 0.7991 & 0.5196                                  \\
Adasparse               & 0.8085          & 0.8114          & 0.8012          & 0.6839          & 0.6640          & 0.7100          & 0.7067          & 0.7304          & 0.8113          & 0.8073          & 0.5230                                   & 0.6895          & 0.4791                                   & 0.7981          & 0.5201                                   \\
Adaptdhm                & 0.8012          & 0.8092          & 0.7963          & 0.6982          & 0.6807          & 0.7285          & 0.7114          & 0.7235          & 0.8139          & 0.8030          & 0.5266                                   & 0.7056          & 0.4802                                   & 0.7998          & 0.5193                                   \\
M2M                     & 0.8016          & 0.8080          & 0.7946          & 0.6837          & 0.6656          & 0.7134          & 0.7027          & 0.7147          & 0.8151          & 0.8017          & 0.5297                                   & 0.6915          & 0.4887                                   & 0.7983          & 0.5230                                   \\
PPNet                   & 0.8080          & 0.8107          & 0.7982          & 0.6768          & 0.6492          & 0.6988          & 0.7122          & 0.7327          & 0.8110          & 0.8056          & 0.5255                                   & 0.6779          & 0.4765                                   & 0.7984          & 0.5188                                   \\
EPNet                   & 0.8040          & 0.8107          & 0.7979          & 0.7009          & 0.6816          & 0.7305          & 0.7132          & 0.7250          & 0.8134          & 0.8049          & 0.5249                                   & 0.7078          & 0.4746                                   & 0.8000          & 0.5180                                   \\ \midrule
MMOE+SDSP               & \textbf{0.8197}$^*$ & \textbf{0.8191}$^*$ & \uline{0.8072}$^*$ & \textbf{0.7056}$^*$ & \uline{0.6841}$^*$ & \textbf{0.7351}$^*$ & \textbf{0.7168}$^*$ & 0.7308 & \uline{0.8155}$^*$       & \textbf{0.8151}$^*$ & \textbf{0.5143}$^*$                          & \uline{0.7107}$^*$ & \uline{0.4680}$^*$                    & \uline{0.8024}$^*$ & \uline{0.5158}$^*$                    \\
PLE+SDSP                & \uline{0.8147}$^*$    & \uline{0.8190}$^*$    & \textbf{0.8090}$^*$       & \uline{0.7049}$^*$    & \textbf{0.6848}$^*$       & \uline{0.7342}$^*$    & \textbf{0.7168}$^*$ & \uline{0.7322}      & \textbf{0.8165}$^*$             & \uline{0.8147}$^*$    & \uline{0.5151}$^*$                             & \textbf{0.7116}$^*$       & \textbf{0.4652}$^*$                          & \textbf{0.8031}$^*$       & \textbf{0.5147}$^*$          \\ \bottomrule                       
\end{tabular}}}
\end{table*}

\begin{table*}[t]
\caption{Compatibility Experiment Results. $``*"$ indicates the statistically significant
improvements (i.e., two-sided t-test with $p < 0.05$) over the original baseline.}
\label{Compatibility}
\fontsize{8}{9.7}\selectfont 
\begin{tabular}{ccccccccccccc}
\toprule
\multirow{2}{*}{ AUC   ($\uparrow$)} & \multicolumn{4}{c}{Movielens}      & \multicolumn{4}{c}{Amazon}                        & \multicolumn{4}{c}{Douban}                                            \\ \cmidrule{2-13}
  & D1              & D2              & D3              & Overall         & D1              & D2              & D3              & Overall         & D1              & D2              & D3              & Overall         \\  \midrule
MMOE & 0.8114 & 0.8122 & 0.7998 & 0.8077 & 0.7005 & 0.6826 & 0.7323 & 0.7073 & 0.7082 & 0.7250 & 0.8109 & 0.7974          \\
MMOE+SDSP                                 & \textbf{0.8197}$^*$ & \textbf{0.8191}$^*$ & \textbf{0.8072}$^*$ & \textbf{0.8151}$^*$ & \textbf{0.7056}$^*$ & \textbf{0.6841}$^*$ & \textbf{0.7351}$^*$ & \textbf{0.7107}$^*$ & \textbf{0.7168}$^*$ & \textbf{0.7308}$^*$ & \textbf{0.8155}$^*$ & \textbf{0.8024}$^*$ \\ \midrule
PLE & 0.8105 & 0.8134 & 0.8014 & 0.8085 & 0.7025 & 0.6796 & 0.7324 & 0.7075 & 0.7096 & 0.7305 & 0.8104 & 0.7976       \\
PLE+SDSP                                  &  \textbf{0.8147}$^*$ & \textbf{0.8190}$^*$ & \textbf{0.8090}$^*$ & \textbf{0.8147}$^*$ & \textbf{0.7049}$^*$ & \textbf{0.6848}$^*$ & \textbf{0.7342}$^*$ & \textbf{0.7116}$^*$ & \textbf{0.7168}$^*$ & \textbf{0.7322}$^*$ & \textbf{0.8165}$^*$ & \textbf{0.8031}$^*$\\ \midrule
SARNet & 0.8123 & 0.8134 & 0.8004 & 0.8087 & \textbf{0.6978} & 0.6785 & 0.7239 & 0.7020 & 0.7163 & 0.7345 & 0.8117 & 0.7991         \\
SARNet+SDSP                               & \textbf{0.8130}$^*$ & \textbf{0.8159}$^*$ & \textbf{0.8048}$^*$ & \textbf{0.8113}$^*$ & 0.6947 & \textbf{0.6822}$^*$ & \textbf{0.7261}$^*$ & \textbf{0.7047}$^*$ & \textbf{0.7205}$^*$ & \textbf{0.7418}$^*$ & \textbf{0.8137}$^*$ & \textbf{0.8021}$^*$ \\ \bottomrule
\end{tabular}
\end{table*}

\subsection{Experiment Setting}
\subsubsection{\textbf{Datasets.}}
We perform experiments with three public datasets. More statistics are presented in Table \ref{dataset}.

\begin{itemize}[leftmargin=*]
    \item \textbf{Movielens\footnote{\url{https://grouplens.org/datasets/movielens/}}}:  The dataset provides information on the preferences of people about movies, including 7 features related to users and 2 features related to items. The "age" feature of users is utilized to separate the dataset into three domains. Furthermore, the dataset is divided randomly into three parts: training, validation, and testing sets, with an 8:1:1 proportion.
    \item \textbf{Amazon-5core\footnote{\url{https://cseweb.ucsd.edu/~jmcauley/datasets.html##amazon_reviews}}}:  The dataset is a compact subset from Amazon, including users and items with a minimum of 5 interactions. Three domains with overlapped users and items, "Clothing", "Beauty",  and "Health" are utilized for the training and evaluation process. For simplicity, we denote this dataset as "Amazon".
    \item \textbf{Douban\footnote{\url{https://github.com/FengZhu-Joey/GA-DTCDR/tree/main/Data}}}:  The dataset was collected from Douban, which is randomly split into training, validation, and test sets, following an 8:1:1 ratio. The user's ratings (ranging from 1 to 5)  higher than 3  are considered valid clicks.
\end{itemize}

\subsubsection{\textbf{Baselines.}}
To verify the effectiveness of the proposed method, the following baseline is utilized to compare:
\begin{itemize}[leftmargin=*]
    \item \textbf{SharedBottom} \cite{caruana1997sharedbottom-multi-task} is a classical multi-task method that utilizes a common bottom layer to capture the task-shared knowledge. In the multi-domain learning field, each task tower is replaced with a domain-specific tower.
    \item \textbf{MMOE} \cite{ma2018mmoe-multi-task-rec}  employs multiple bottom networks  with distinct gate networks for each task to merge information effectively.
    \item \textbf{PLE} \cite{tang2020ple-multi-task-rec} uses a novel progressive layered extraction structure that separates shared and task-specific components, improving task performance across varied correlations.
    \item \textbf{STAR} \cite{sheng2021star-multi-domain-rec}  uses shared centered parameters and domain-specific parameters to improve click-through rate prediction.
     \item \textbf{SARNet} \cite{shen2021sarnet} leverages scenario-aware attention modules and a multi-scenario gating module to capture cross-scenario interests.
    \item \textbf{Adasparse} \cite{yang2022adasparse_multi-domain-rec} improves multi-domain CTR prediction by learning adaptively sparse structures using domain-aware neuron weighting and sparsity regularization.
        \item \textbf{Adaptdhm} \cite{li2022adaptdhm-multi-domain-rec} employs a hierarchical structure with a dynamic routing-based distribution adaptation module to cluster samples.
    \item \textbf{M2M} \cite{zhang2022m2m-multi-domain-multi-task-rec} utilizes a multi-task multi-scenario meta-learning framework with meta units to model inter-scenario correlations.
    \item \textbf{EPNET} \cite{chang2023pepnet-multi-domain-multi-task-rec} performs personalized embedding selection to fuse features of varying importance across domains.
        \item \textbf{PPNET } \cite{chang2023pepnet-multi-domain-multi-task-rec} dynamically modifies DNN parameters to balance task sparsity, enabling tailored multi-domain recommendations.
        \end{itemize}

\subsubsection{\textbf{Metrics.}}
Following previous work \cite{gao2024hierrec-multi-domain-rec-jingtong, li2024scenario-multi-domain-rec-xiaopeng}, two metrics are utilized to assess the models' performance on the test set: the Area Under the ROC Curve (AUC) and LogLoss. Typically, a higher AUC or a lower LogLoss value indicates superior performance. Notably, according to previous studies \cite{song2019autoint}, an improvement in AUC at the 0.001 level (1\textperthousand) in CTR prediction tasks is considered significant \cite{luo2024one-AREAD-multi-domain-rec} and can lead to substantial commercial benefits online.

\subsubsection{\textbf{Implementation Details.}}
For simplicity, we adopt a single-layer MLP structure for both the prototype encoder and decoder. The number of prototypes is fixed to 10. Specifically, we design a multi-domain sampler to ensure that the amount of data from each domain under each batch is fixed for the purpose of learning prototypes. We searched for the learning rate in the range of [0.01, 0.001,0.0001]. For prototype learning parameter $\alpha$, we search in the range: $[1e^{-3}, 1e^{-4}, 1e^{-5}]$. The number of experts for each domain is fixed to 1. The probability $p$ is initialized to 1, and the decay rate is 0.9. The default parameter setting can be found in Appendix \ref{appendix_default}. To validate the effectiveness of SDSP, we combine the proposed SDSP with the classical MMOE and PLE to report the performance and the subsequent analysis. 
\vspace{-2pt}
\subsection{Performance Comparison (RQ1)}
To evaluate the effectiveness of SDSP, we conduct performance comparison experiments. The experimental results, as presented in Table \ref{main_exp}, allow us to draw the following conclusions:
\begin{itemize}[leftmargin=*]
\item  Compared with multi-task methods (SharedBottom, MMOE, PLE), the proposed method achieves better performance. For example, MMOE+SDSP exceeds SharedBottom by 1.26\%, 3.86\%, and 0.32\% in overall AUC. This improvement highlights the effectiveness of SDSP in mitigating negative transfer caused by inappropriate sharing structure in classic multi-task methods. 

\item Compared with multi-domain methods (STAR, SAR-Net, AdaptDHM), SDSP achieves better performance. Taking SAR-Net as an example, on the Movielens dataset, our method PLE+SDSP achieves an overall AUC of 0.8147, surpassing SAR-Net's 0.8087 by 0.60\%. This demonstrates that our domain selection method can effectively alleviate the inter-domain negative effects by dynamically selecting similar domains for each domain.

\item Adasparse, as a representative sparse MDR method, outperforms most methods on the Movielens dataset. However, Adasparse is still inferior to our method. This suggests that while sparse methods can alleviate negative migration between domains to some extent, relying only on the weights learned from representations is not sufficient to avoid NTP and guidance for domain relationships is also required.
\end{itemize}


\begin{figure*}[t]
    \centering
    \begin{subfigure}{0.24\textwidth}  \label{proto_hyper}\includegraphics[width=\textwidth]{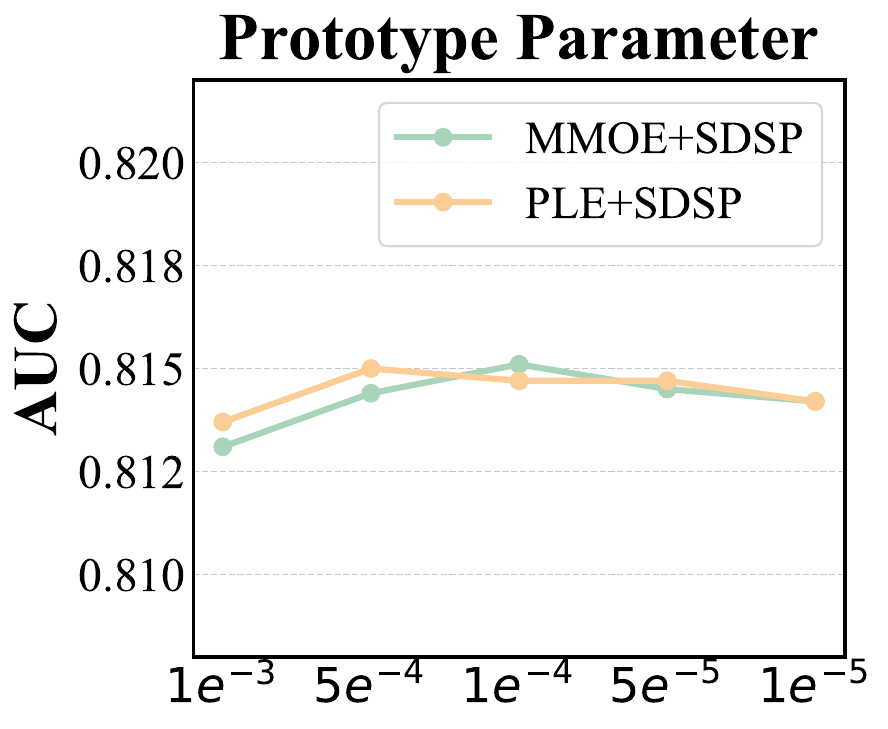}
   \end{subfigure}
    \begin{subfigure}{0.24\textwidth} 
    \label{expert_hyper}{\includegraphics[width=\textwidth]{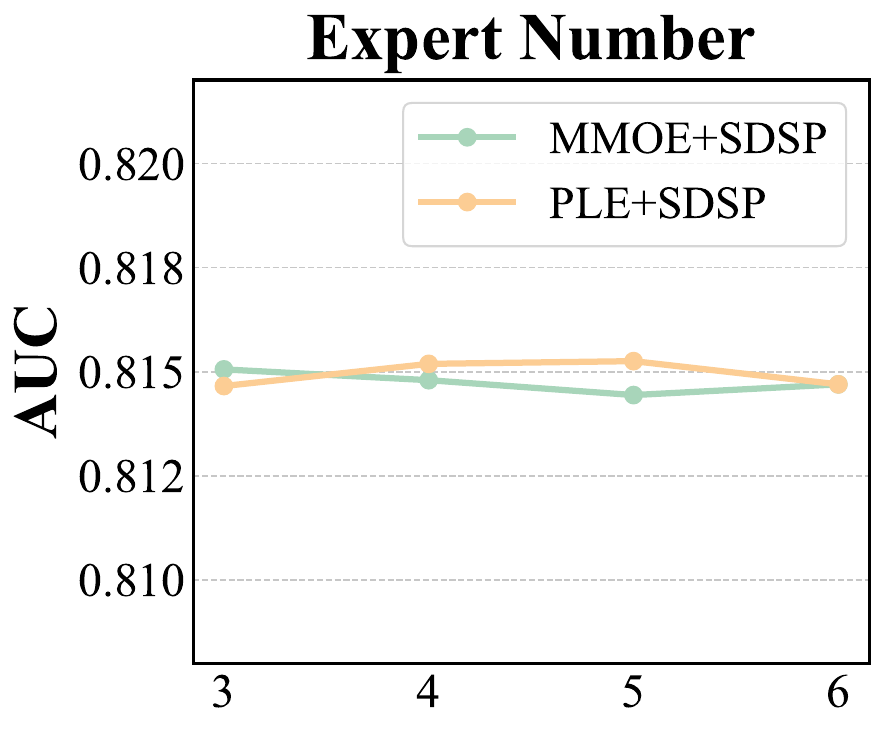}}
       \end{subfigure}
           \begin{subfigure}{0.24\textwidth} 
    \label{batch}{\includegraphics[width=\textwidth]{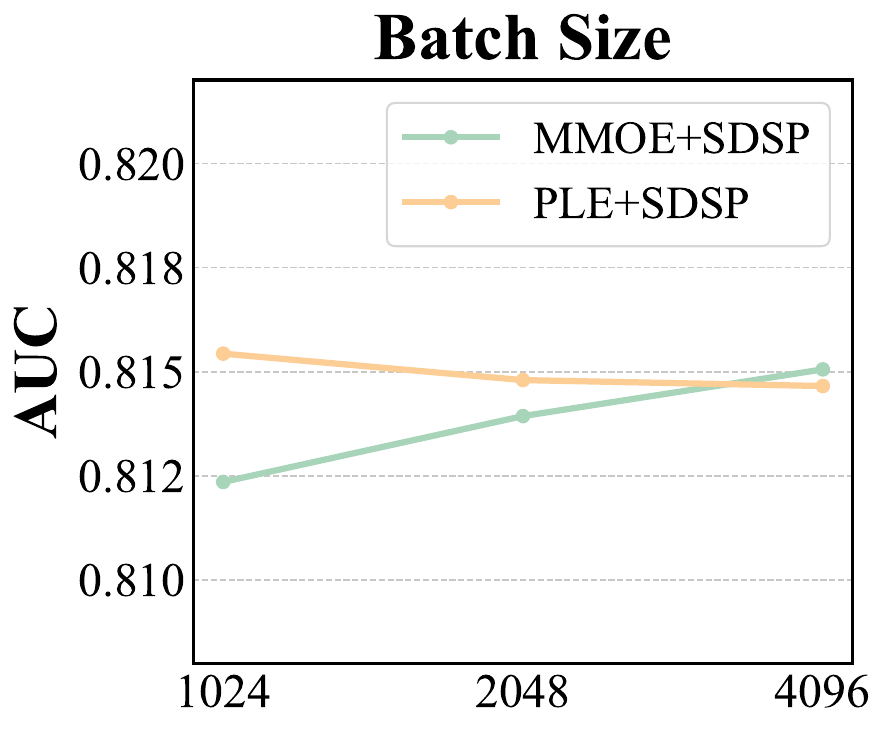}}
       \end{subfigure}
           \begin{subfigure}{0.24\textwidth} 
    \label{proto_num}{\includegraphics[width=\textwidth]{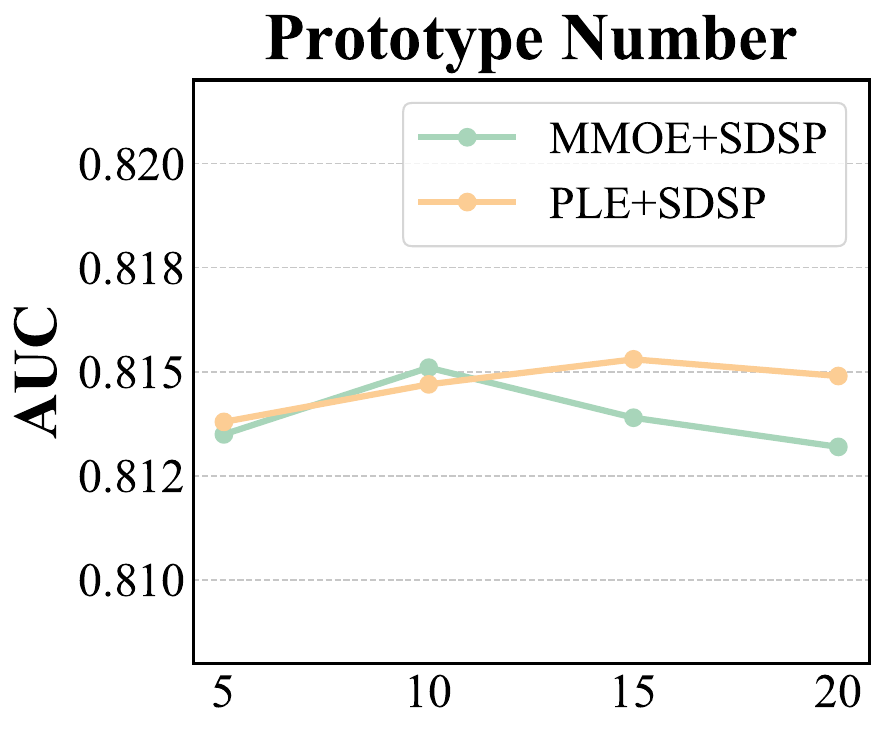}}
       \end{subfigure}
    \caption{Hyperparameter Analysis of MMOE+SDSP and PLE+SDSP on Movielens dataset.}
    \label{Hyper}
\end{figure*}

\begin{figure}[t]
\includegraphics[width=0.9\linewidth]{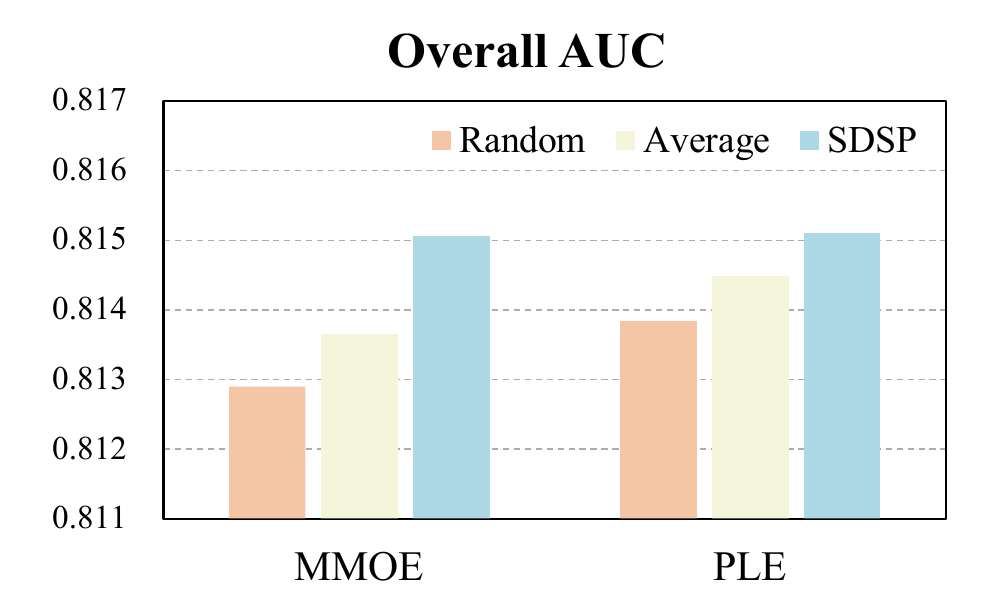}
\caption{The ablation experiments of MMOE and PLE on the Movielens dataset. ``Random'' represents completely randomly selected, and ``Average'' denotes using the average representation to measure domain distance.}
\label{ablation}
\end{figure}

\subsection{Compatibility Experiment (RQ2)}
To verify the compatibility of the proposed SDSP, we conduct experiments with three backbone models in Table \ref{Compatibility}. Based on the results, we have the following observations:
\begin{itemize}[leftmargin=*]
\item Across three datasets, integrating SDSP into backbone networks improves most AUC scores for each domain and the overall performance. For instance, in the Movielens dataset, integrating SDSP into MMOE improves the overall AUC from 0.8077 to 0.8151, which is a statistically significant increase. Similar trends are observed in other datasets and backbone network, which could attributed to the effectiveness of the SDSP strategy.
\item Our proposed SDSP improves performance differently across different baselines. Among them, MMOE shows the most significant performance improvement. We speculate that this may be due to the presence of a large number of shared structures in MMOE, leading to significant negative transfer problems in inference, which we can address nicely with domain selection.
\end{itemize}

\subsection{Hyper-Parameters Experiment (RQ3)}
To analyze the impact of various configurations on SDSP, as shown in Figure \ref{Hyper}, we explore a range of parameters, including prototype learning hyperparameter $\gamma$, expert number, batch size and prototype number. In order to ensure fairness, we fix all other parameters as default parameters, which are shown in Appendix \ref{appendix_default}.
\subsubsection{\textbf{Prototype Learning Hyperparameter $\gamma$.} }
From Figure \ref{Hyper}, both MMOE+SDSP and PLE+SDSP achieve the best performance at a learning rate of 
10$^{-4}$, with a slight decline when the learning rate is increased to 10$^{-3}$ or decreased to 10$^{-5}$. Notably, even at suboptimal learning rates, our models consistently outperform baseline methods, demonstrating the robustness of the SDSP strategy.

\subsubsection{\textbf{Expert Number.}}
To analyze the impact of expert numbers on SDSP, we report results for expert numbers 3 to 6. As shown in Figure \ref{Hyper}, MMOE and PLE exhibit slight fluctuations with changes in the expert number, and overall performance is relatively stable, indicating that our method is not sensitive to the expert number.

\subsubsection{\textbf{Batch Size.}}
 The third results in Figure \ref{Hyper} show that both MMOE+SDSP and PLE+SDSP achieve better performance at a batch size of 4096. Both MMOE+SDSP and PLE+SDSP maintain around 0.815 AUC. This may be because a larger batch size provides sufficient samples for each domain, enabling more robust learning domain distance and improved overall performance.

 \subsubsection{\textbf{Prototype Number.}}
The final experiment in Figure \ref{Hyper} evaluates the effect of prototype number on AUC performance for MMOE+SDSP and PLE+SDSP. Both methods achieve better performance around 10 prototypes, with MMOE+SDSP reaching approximately 0.815 AUC. This suggests that using 10 prototypes provides an optimal balance, effectively capturing domain features without overfitting  across excessive prototypes.

\subsection{Discussion of Model Variants (RQ4)}
To investigate the effectiveness of the proposed prototype-based metric and dynamic selection strategy, we conduct ablation experiments in Figure \ref{ablation}. In the experiment, ``Random'' represents randomly selecting similar domains for each domain, while ``Average'' denotes using the mean of the representations to measure distance for selection. The results show that both modules contribute to performance improvement, demonstrating their effectiveness. Additionally, we can find even if the ``Average''   strategy is problematic (e.g., it cannot model asymmetric distances between domains), it still outperforms ``Random'' domain selection, which can be attributed to that the proposed domain selection strategy is relatively robust, as it uses the performance as a supervisory signal to guide the selection process. This additional supervision can alleviate the negative impact of an inaccurate domain measure to some extent.

\begin{figure}[t]
\includegraphics[width=1\linewidth]{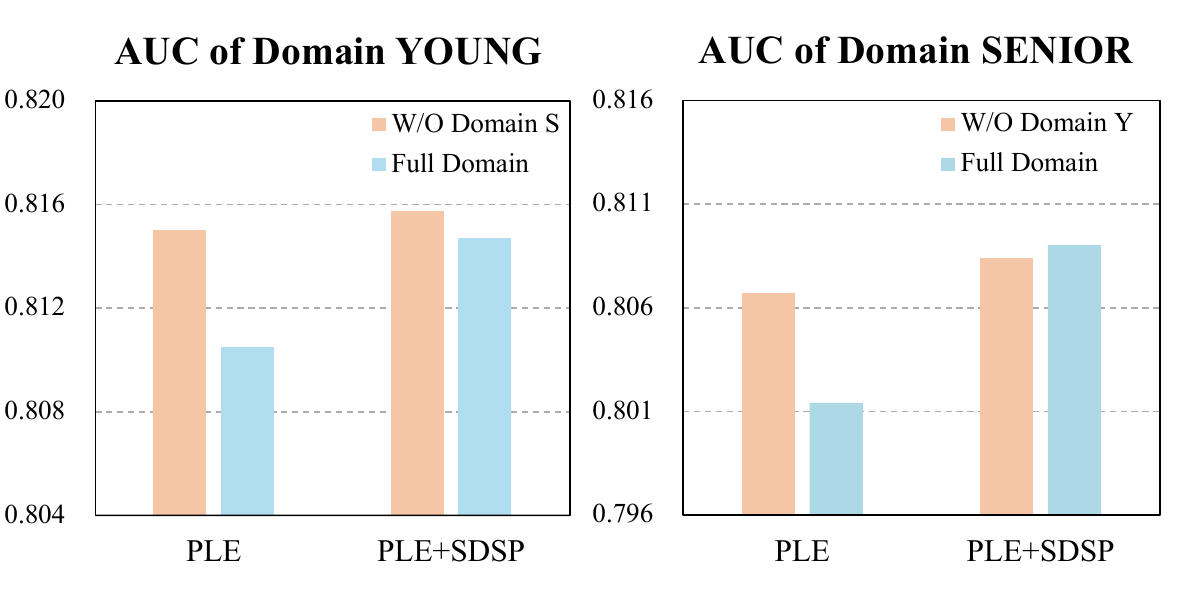}
\caption{The comparison of the negative transfer of PLE and PLE+SDSP on Movielens dataset. ``S'' and ``Y'' are abbreviations for the  SENIOR and YOUNG, respectively.}
\vspace{-3pt}
\label{negative_transfer}
\end{figure}

\subsection{Negative Transfer Research (RQ5)}
In order to investigate whether the proposed module can alleviate the negative transfer problem among domains, we conduct a negative transfer experiment of domains on the Movielens dataset, as shown in Figure \ref{negative_transfer}. To ensure the fairness of the comparison, we report on the optimal performance of different domain subsets within the same parameter search range. The original PLE model suffers from significant negative transfer issues. By incorporating the proposed SDSP, we effectively mitigate this problem. Specifically, when the domain SENIOR causes a strong negative transfer to the domain SENIOR, our method alleviates the issue. Similarly, when the domain YOUNG negatively impacts the domain SENIOR, our approach not only avoids such negative transfer but also extracts beneficial information from the senior domain to enhance the performance of the domain YOUNG.

\begin{table}[t]
\centering
\caption{Average inference time per batch with size 4096.}
\label{inference_time}
\begin{tabular}{cccc} \toprule
\multirow{2}{*}{Model} & \multicolumn{3}{c}{Average Inference Time   (ms)} \\ \cmidrule{2-4}
                       & Movielens         & Amazon        & Douban        \\ \midrule
SharedBottom           & 0.84              & 0.87          & 1.01          \\
MMOE                   & 1.20               & 0.78          & 0.97          \\
PLE                    & 1.41              & 1.12          & 1.34          \\
STAR                   & 2.78              & 1.87          & 1.98          \\
SAR-Net                & 1.93              & 1.48          & 1.74          \\
Adasparse              & 1.08              & 0.71          & 1.16          \\
Adaptdhm               & 0.96              & 0.61          & 0.82          \\
M2M                    & 2.91              & 2.71          & 2.71          \\
PPNet                  & 2.66              & 1.91          & 1.98          \\
EPNet                  & 0.61              & 0.40           & 0.52          \\ \midrule
MMOE+SDSP              & 1.59              & 1.20          & 1.40          \\
Increase               & 0.39              & 0.42          & 0.43          \\ \midrule
PLE+SDSP               & 1.96              & 1.53          & 1.72          \\
Increase               & 0.55              & 0.41          & 0.38         \\ \bottomrule 
\end{tabular}
\end{table}

\subsection{Efficiency Analysis  (RQ6)}
The time overhead of the recommendation algorithm mainly stems from the embedding size and the computational complexity. The former part we have analyzed in Section \ref{proto_measure}. Thus, we conduct the experiments to report the average inference time of the SDSP module. As shown in Table \ref{inference_time}, integrating the SDSP module into baseline models results in a minimal increase in inference time. For instance, integrating SDSP into MMOE increases inference time by 0.39ms, 0.42ms, and 0.43ms on the Movielens, Amazon, and Douban datasets, respectively. This represents the SDSP module introduces minimal inference overhead, demonstrating its efficiency. Due to space limitations, the parameter size results of the SDSP module are provided in the Appendix \ref{appendix_inference}.

\section{Related Work}


\subsection{Multi-Domain Recommendation}
Recommendation systems (RS) \cite{gu2021self,lqd1,lqd2, hyp1, hyp2, liuyue_Rec1,liuyue_rec2,wang2019ngcf-graph-rec} aims to analyze user interactions to uncover interests, becoming a key research focus in recent years. However, classical single-domain approaches are unable to process the multi-domain data, which are often encountered in real-world applications.
As a result, abundant Multi-Domain Recommendation (MDR) methods have been proposed \cite{luo2023mamdr-multi-domain-rec, chen2021user-cross-domain-rec,chang2023pepnet-multi-domain-multi-task-rec, fu2023unified-llm-multi-domain-rec, li2022gromov-cross-domain-rec,  fan2023adversarial-cross-domain-rec,gao2023autotransfer-cross-domain-rec}, leveraging shared knowledge across domains  to address challenges such as cold-start issues \cite{wang2017item,zhu2024m,jin2022multi}. These methods can be broadly categorized into Shared-Specific (SS) based methods and Dynamic Weight (DW) based methods, depending on how they model inter-domain relationships. SS-based methods\cite{tang2020ple-multi-task-rec,tong2024mdap,ning2023multi-multi-domain-graph-rec}, such as  STAR \cite{sheng2021star-multi-domain-rec} employ a shared-bottom architecture with domain-specific towers to model features. While DW-based methods \cite{yan2022apg-rec,bian2020can-rec, zhang2022m2m-multi-domain-multi-task-rec} often use scenario-sensitive features to generate  weighted parameters for the network. 

However, DW-based methods rely on manually selected features and hence are less generalizable when new scenarios are encountered. Furthermore, most SS-based approaches \cite{wang2023plate-multi-domain-rec,wang2024diff-cold-multi-domain-rec} employ a single domain-shared module, making it difficult to transfer complex multi-domain knowledge. To tackle these issues, SDSP proposes a novel domain selection module that can decouple the current single domain-shared  without additional feature engineering.

\subsection{Selection Problem}
\label{select}
Discrete selection problems are generally more challenging than continuous optimization problems. This is because discrete choices involve combinatorial complexity, where the solution space is not smooth or continuous. Thus, traditional optimization techniques like gradient-based methods cannot be directly applied, requiring specialized algorithms to explore the solution space efficiently.

In some fields, several attempts \cite{zhu2022user,zhou2022filter} have been proposed to solve different selection problems. Standley et al. \cite{standley2020tasks-multi-task} propose a group framework for choosing the suitable tasks to train together in the multi-task field.  In the multi-modal field, He et al. \cite{he2024efficient-multi-modal} proposes a greedy modality selection algorithm via submodular maximization.  In the cross-domain field,  Park et al. \cite{park2024pacer-cross-domain-rec} devise a weight factor to control the negative transfer of the multi-domain part. 

However, the greedy-based search algorithm incurs additional overhead and is not applicable in the time-sensitive field. Besides, a single gating mechanism doesn't apply to the complex multi-domain field. To address these issues, SDSP proposes a dynamic selection method to tackle the selection problem efficiently.

\section{Conclusion}
In this paper, we propose a lightweight and dynamic Similar Domain Selection Principle (SDSP) for multi-domain recommendation. Unlike relying on a single structure for domain knowledge transfer, SDSP introduces a novel prototype-based domain distance measure to model complex inter-domain relationships.  By dynamically selecting suitable domains for each target domain, SDSP alleviates NTP and enhances recommendation performance. Extensive experiments on three datasets validate the effectiveness and generalization of the proposed method, demonstrating consistent improvements across various backbone models. As the first approach to measure domain gaps and dynamically select similar domains in MDR, SDSP offers a promising solution for advancing the field.

\begin{acks}
This research was partially supported by Research Impact Fund (No.R1015-23), Collaborative Research Fund (No.C1043-24GF), Huawei (Huawei Innovation Research Program, Huawei Fellowship), Tencent (CCF-Tencent Open Fund, Tencent Rhino-Bird Focused Research Program), Alibaba (CCF-Alimama Tech Kangaroo Fund No. 2024002), Ant Group (CCF-Ant Research Fund), Kuaishou, the National Key Research and Development Program of China under Grant NO. 2024YFF0729003, the National Natural Science Foundation of China under Grant NOs. 62176014, the Fundamental Research Funds for the Central Universities, State Key Laboratory of Complex \& Critical Software Environment.
\end{acks}

\bibliographystyle{ACM-Reference-Format}
\balance
\bibliography{9-sample-base}


\begin{thebibliography}{72}


\ifx \showCODEN    \undefined \def \showCODEN     #1{\unskip}     \fi
\ifx \showDOI      \undefined \def \showDOI       #1{#1}\fi
\ifx \showISBNx    \undefined \def \showISBNx     #1{\unskip}     \fi
\ifx \showISBNxiii \undefined \def \showISBNxiii  #1{\unskip}     \fi
\ifx \showISSN     \undefined \def \showISSN      #1{\unskip}     \fi
\ifx \showLCCN     \undefined \def \showLCCN      #1{\unskip}     \fi
\ifx \shownote     \undefined \def \shownote      #1{#1}          \fi
\ifx \showarticletitle \undefined \def \showarticletitle #1{#1}   \fi
\ifx \showURL      \undefined \def \showURL       {\relax}        \fi
\providecommand\bibfield[2]{#2}
\providecommand\bibinfo[2]{#2}
\providecommand\natexlab[1]{#1}
\providecommand\showeprint[2][]{arXiv:#2}

\bibitem[Bai and Zhao(2022)]%
        {bai2022saliency-multi-task}
\bibfield{author}{\bibinfo{person}{Guangji Bai} {and} \bibinfo{person}{Liang Zhao}.} \bibinfo{year}{2022}\natexlab{}.
\newblock \showarticletitle{Saliency-regularized deep multi-task learning}. In \bibinfo{booktitle}{\emph{Proceedings of the 28th ACM SIGKDD Conference on Knowledge Discovery and Data Mining}}. \bibinfo{pages}{15--25}.
\newblock


\bibitem[Besse et~al\mbox{.}(2016)]%
        {besse2016review}
\bibfield{author}{\bibinfo{person}{Philippe~C Besse}, \bibinfo{person}{Brendan Guillouet}, \bibinfo{person}{Jean-Michel Loubes}, {and} \bibinfo{person}{Fran{\c{c}}ois Royer}.} \bibinfo{year}{2016}\natexlab{}.
\newblock \showarticletitle{Review and perspective for distance-based clustering of vehicle trajectories}.
\newblock \bibinfo{journal}{\emph{IEEE Transactions on Intelligent Transportation Systems}} \bibinfo{volume}{17}, \bibinfo{number}{11} (\bibinfo{year}{2016}), \bibinfo{pages}{3306--3317}.
\newblock


\bibitem[Bian et~al\mbox{.}(2020)]%
        {bian2020can-rec}
\bibfield{author}{\bibinfo{person}{Weijie Bian}, \bibinfo{person}{Kailun Wu}, \bibinfo{person}{Lejian Ren}, \bibinfo{person}{Qi Pi}, \bibinfo{person}{Yujing Zhang}, \bibinfo{person}{Can Xiao}, \bibinfo{person}{Xiang-Rong Sheng}, \bibinfo{person}{Yong-Nan Zhu}, \bibinfo{person}{Zhangming Chan}, \bibinfo{person}{Na Mou}, {et~al\mbox{.}}} \bibinfo{year}{2020}\natexlab{}.
\newblock \showarticletitle{Can: Feature co-action for click-through rate prediction}.
\newblock \bibinfo{journal}{\emph{arXiv preprint arXiv:2011.05625}} (\bibinfo{year}{2020}).
\newblock


\bibitem[Caruana(1997)]%
        {caruana1997sharedbottom-multi-task}
\bibfield{author}{\bibinfo{person}{Rich Caruana}.} \bibinfo{year}{1997}\natexlab{}.
\newblock \showarticletitle{Multitask learning}.
\newblock \bibinfo{journal}{\emph{Machine learning}}  \bibinfo{volume}{28} (\bibinfo{year}{1997}), \bibinfo{pages}{41--75}.
\newblock


\bibitem[Chang et~al\mbox{.}(2023)]%
        {chang2023pepnet-multi-domain-multi-task-rec}
\bibfield{author}{\bibinfo{person}{Jianxin Chang}, \bibinfo{person}{Chenbin Zhang}, \bibinfo{person}{Yiqun Hui}, \bibinfo{person}{Dewei Leng}, \bibinfo{person}{Yanan Niu}, \bibinfo{person}{Yang Song}, {and} \bibinfo{person}{Kun Gai}.} \bibinfo{year}{2023}\natexlab{}.
\newblock \showarticletitle{Pepnet: Parameter and embedding personalized network for infusing with personalized prior information}. In \bibinfo{booktitle}{\emph{Proceedings of the 29th ACM SIGKDD Conference on Knowledge Discovery and Data Mining}}. \bibinfo{pages}{3795--3804}.
\newblock


\bibitem[Chen et~al\mbox{.}(2021)]%
        {chen2021user-cross-domain-rec}
\bibfield{author}{\bibinfo{person}{Lei Chen}, \bibinfo{person}{Fajie Yuan}, \bibinfo{person}{Jiaxi Yang}, \bibinfo{person}{Xiangnan He}, \bibinfo{person}{Chengming Li}, {and} \bibinfo{person}{Min Yang}.} \bibinfo{year}{2021}\natexlab{}.
\newblock \showarticletitle{User-specific adaptive fine-tuning for cross-domain recommendations}.
\newblock \bibinfo{journal}{\emph{IEEE Transactions on Knowledge and Data Engineering}} \bibinfo{volume}{35}, \bibinfo{number}{3} (\bibinfo{year}{2021}), \bibinfo{pages}{3239--3252}.
\newblock


\bibitem[Cheng et~al\mbox{.}(2016)]%
        {cheng2016wide&deep-rec}
\bibfield{author}{\bibinfo{person}{Heng-Tze Cheng}, \bibinfo{person}{Levent Koc}, \bibinfo{person}{Jeremiah Harmsen}, \bibinfo{person}{Tal Shaked}, \bibinfo{person}{Tushar Chandra}, \bibinfo{person}{Hrishi Aradhye}, \bibinfo{person}{Glen Anderson}, \bibinfo{person}{Greg Corrado}, \bibinfo{person}{Wei Chai}, \bibinfo{person}{Mustafa Ispir}, {et~al\mbox{.}}} \bibinfo{year}{2016}\natexlab{}.
\newblock \showarticletitle{Wide \& deep learning for recommender systems}. In \bibinfo{booktitle}{\emph{Proceedings of the 1st workshop on deep learning for recommender systems}}. \bibinfo{pages}{7--10}.
\newblock


\bibitem[Covington et~al\mbox{.}(2016)]%
        {covington2016deep}
\bibfield{author}{\bibinfo{person}{Paul Covington}, \bibinfo{person}{Jay Adams}, {and} \bibinfo{person}{Emre Sargin}.} \bibinfo{year}{2016}\natexlab{}.
\newblock \showarticletitle{Deep neural networks for youtube recommendations}. In \bibinfo{booktitle}{\emph{Proceedings of the 10th ACM conference on recommender systems}}. \bibinfo{pages}{191--198}.
\newblock


\bibitem[Fan et~al\mbox{.}(2023)]%
        {fan2023adversarial-cross-domain-rec}
\bibfield{author}{\bibinfo{person}{Wenqi Fan}, \bibinfo{person}{Xiangyu Zhao}, \bibinfo{person}{Qing Li}, \bibinfo{person}{Tyler Derr}, \bibinfo{person}{Yao Ma}, \bibinfo{person}{Hui Liu}, \bibinfo{person}{Jianping Wang}, {and} \bibinfo{person}{Jiliang Tang}.} \bibinfo{year}{2023}\natexlab{}.
\newblock \showarticletitle{Adversarial attacks for black-box recommender systems via copying transferable cross-domain user profiles}.
\newblock \bibinfo{journal}{\emph{IEEE Transactions on Knowledge and Data Engineering}} \bibinfo{volume}{35}, \bibinfo{number}{12} (\bibinfo{year}{2023}), \bibinfo{pages}{12415--12429}.
\newblock


\bibitem[Fu et~al\mbox{.}(2023)]%
        {fu2023unified-llm-multi-domain-rec}
\bibfield{author}{\bibinfo{person}{Zichuan Fu}, \bibinfo{person}{Xiangyang Li}, \bibinfo{person}{Chuhan Wu}, \bibinfo{person}{Yichao Wang}, \bibinfo{person}{Kuicai Dong}, \bibinfo{person}{Xiangyu Zhao}, \bibinfo{person}{Mengchen Zhao}, \bibinfo{person}{Huifeng Guo}, {and} \bibinfo{person}{Ruiming Tang}.} \bibinfo{year}{2023}\natexlab{}.
\newblock \showarticletitle{A unified framework for multi-domain ctr prediction via large language models}.
\newblock \bibinfo{journal}{\emph{ACM Transactions on Information Systems}} (\bibinfo{year}{2023}).
\newblock


\bibitem[Gan et~al\mbox{.}(2024)]%
        {gan2024peace}
\bibfield{author}{\bibinfo{person}{Chunjing Gan}, \bibinfo{person}{Bo Huang}, \bibinfo{person}{Binbin Hu}, \bibinfo{person}{Jian Ma}, \bibinfo{person}{Zhiqiang Zhang}, \bibinfo{person}{Jun Zhou}, \bibinfo{person}{Guannan Zhang}, {and} \bibinfo{person}{Wenliang Zhong}.} \bibinfo{year}{2024}\natexlab{}.
\newblock \showarticletitle{PEACE: Prototype lEarning Augmented transferable framework for Cross-domain rEcommendation}. In \bibinfo{booktitle}{\emph{Proceedings of the 17th ACM International Conference on Web Search and Data Mining}}. \bibinfo{pages}{228--237}.
\newblock


\bibitem[Gao et~al\mbox{.}(2024)]%
        {gao2024hierrec-multi-domain-rec-jingtong}
\bibfield{author}{\bibinfo{person}{Jingtong Gao}, \bibinfo{person}{Bo Chen}, \bibinfo{person}{Menghui Zhu}, \bibinfo{person}{Xiangyu Zhao}, \bibinfo{person}{Xiaopeng Li}, \bibinfo{person}{Yuhao Wang}, \bibinfo{person}{Yichao Wang}, \bibinfo{person}{Huifeng Guo}, {and} \bibinfo{person}{Ruiming Tang}.} \bibinfo{year}{2024}\natexlab{}.
\newblock \showarticletitle{HierRec: Scenario-Aware Hierarchical Modeling for Multi-scenario Recommendations}. In \bibinfo{booktitle}{\emph{Proceedings of the 33rd ACM International Conference on Information and Knowledge Management}}. \bibinfo{pages}{653--662}.
\newblock


\bibitem[Gao et~al\mbox{.}(2023)]%
        {gao2023autotransfer-cross-domain-rec}
\bibfield{author}{\bibinfo{person}{Jingtong Gao}, \bibinfo{person}{Xiangyu Zhao}, \bibinfo{person}{Bo Chen}, \bibinfo{person}{Fan Yan}, \bibinfo{person}{Huifeng Guo}, {and} \bibinfo{person}{Ruiming Tang}.} \bibinfo{year}{2023}\natexlab{}.
\newblock \showarticletitle{AutoTransfer: Instance transfer for cross-domain recommendations}. In \bibinfo{booktitle}{\emph{Proceedings of the 46th International ACM SIGIR Conference on Research and Development in Information Retrieval}}. \bibinfo{pages}{1478--1487}.
\newblock


\bibitem[Gu et~al\mbox{.}(2021)]%
        {gu2021self}
\bibfield{author}{\bibinfo{person}{Yulong Gu}, \bibinfo{person}{Wentian Bao}, \bibinfo{person}{Dan Ou}, \bibinfo{person}{Xiang Li}, \bibinfo{person}{Baoliang Cui}, \bibinfo{person}{Biyu Ma}, \bibinfo{person}{Haikuan Huang}, \bibinfo{person}{Qingwen Liu}, {and} \bibinfo{person}{Xiaoyi Zeng}.} \bibinfo{year}{2021}\natexlab{}.
\newblock \showarticletitle{Self-supervised learning on users' spontaneous behaviors for multi-scenario ranking in e-commerce}. In \bibinfo{booktitle}{\emph{Proceedings of the 30th ACM International Conference on Information \& Knowledge Management}}. \bibinfo{pages}{3828--3837}.
\newblock


\bibitem[Guan et~al\mbox{.}(2022)]%
        {guan2022cross}
\bibfield{author}{\bibinfo{person}{Renchu Guan}, \bibinfo{person}{Haoyu Pang}, \bibinfo{person}{Fausto Giunchiglia}, \bibinfo{person}{Yanchun Liang}, {and} \bibinfo{person}{Xiaoyue Feng}.} \bibinfo{year}{2022}\natexlab{}.
\newblock \showarticletitle{Cross-domain meta-learner for cold-start recommendation}.
\newblock \bibinfo{journal}{\emph{IEEE Transactions on Knowledge and Data Engineering}} \bibinfo{volume}{35}, \bibinfo{number}{8} (\bibinfo{year}{2022}), \bibinfo{pages}{7829--7843}.
\newblock


\bibitem[Guo et~al\mbox{.}(2017)]%
        {guo2017deepfm-rec}
\bibfield{author}{\bibinfo{person}{Huifeng Guo}, \bibinfo{person}{Ruiming Tang}, \bibinfo{person}{Yunming Ye}, \bibinfo{person}{Zhenguo Li}, {and} \bibinfo{person}{Xiuqiang He}.} \bibinfo{year}{2017}\natexlab{}.
\newblock \showarticletitle{DeepFM: a factorization-machine based neural network for CTR prediction}. In \bibinfo{booktitle}{\emph{Proceedings of the 26th International Joint Conference on Artificial Intelligence}}. \bibinfo{pages}{1725--1731}.
\newblock


\bibitem[He et~al\mbox{.}(2020)]%
        {he2020lightgcn-graph-rec}
\bibfield{author}{\bibinfo{person}{Xiangnan He}, \bibinfo{person}{Kuan Deng}, \bibinfo{person}{Xiang Wang}, \bibinfo{person}{Yan Li}, \bibinfo{person}{Yongdong Zhang}, {and} \bibinfo{person}{Meng Wang}.} \bibinfo{year}{2020}\natexlab{}.
\newblock \showarticletitle{Lightgcn: Simplifying and powering graph convolution network for recommendation}. In \bibinfo{booktitle}{\emph{Proceedings of the 43rd International ACM SIGIR conference on research and development in Information Retrieval}}. \bibinfo{pages}{639--648}.
\newblock


\bibitem[He et~al\mbox{.}(2024)]%
        {he2024efficient-multi-modal}
\bibfield{author}{\bibinfo{person}{Yifei He}, \bibinfo{person}{Runxiang Cheng}, \bibinfo{person}{Gargi Balasubramaniam}, \bibinfo{person}{Yao-Hung~Hubert Tsai}, {and} \bibinfo{person}{Han Zhao}.} \bibinfo{year}{2024}\natexlab{}.
\newblock \showarticletitle{Efficient Modality Selection in Multimodal Learning}.
\newblock \bibinfo{journal}{\emph{Journal of Machine Learning Research}} \bibinfo{volume}{25}, \bibinfo{number}{47} (\bibinfo{year}{2024}), \bibinfo{pages}{1--39}.
\newblock


\bibitem[Hou et~al\mbox{.}(2025a)]%
        {hyp2}
\bibfield{author}{\bibinfo{person}{Yupeng Hou}, \bibinfo{person}{Jiacheng Li}, \bibinfo{person}{Ashley Shin}, \bibinfo{person}{Jinsung Jeon}, \bibinfo{person}{Abhishek Santhanam}, \bibinfo{person}{Wei Shao}, \bibinfo{person}{Kaveh Hassani}, \bibinfo{person}{Ning Yao}, {and} \bibinfo{person}{Julian McAuley}.} \bibinfo{year}{2025}\natexlab{a}.
\newblock \showarticletitle{Generating Long Semantic IDs in Parallel for Recommendation}. In \bibinfo{booktitle}{\emph{{KDD}}}.
\newblock


\bibitem[Hou et~al\mbox{.}(2025b)]%
        {hyp1}
\bibfield{author}{\bibinfo{person}{Yupeng Hou}, \bibinfo{person}{Jianmo Ni}, \bibinfo{person}{Zhankui He}, \bibinfo{person}{Noveen Sachdeva}, \bibinfo{person}{Wang-Cheng Kang}, \bibinfo{person}{Ed~H. Chi}, \bibinfo{person}{Julian McAuley}, {and} \bibinfo{person}{Derek~Zhiyuan Cheng}.} \bibinfo{year}{2025}\natexlab{b}.
\newblock \showarticletitle{{ActionPiece}: Contextually Tokenizing Action Sequences for Generative Recommendation}. In \bibinfo{booktitle}{\emph{{ICML}}}.
\newblock


\bibitem[Jin et~al\mbox{.}(2022)]%
        {jin2022multi}
\bibfield{author}{\bibinfo{person}{Jiarui Jin}, \bibinfo{person}{Xianyu Chen}, \bibinfo{person}{Weinan Zhang}, \bibinfo{person}{Yuanbo Chen}, \bibinfo{person}{Zaifan Jiang}, \bibinfo{person}{Zekun Zhu}, \bibinfo{person}{Zhewen Su}, {and} \bibinfo{person}{Yong Yu}.} \bibinfo{year}{2022}\natexlab{}.
\newblock \showarticletitle{Multi-scale user behavior network for entire space multi-task learning}. In \bibinfo{booktitle}{\emph{Proceedings of the 31st ACM International Conference on Information \& Knowledge Management}}. \bibinfo{pages}{874--883}.
\newblock


\bibitem[Kang and McAuley(2018)]%
        {kang2018self-seq-rec}
\bibfield{author}{\bibinfo{person}{Wang-Cheng Kang} {and} \bibinfo{person}{Julian McAuley}.} \bibinfo{year}{2018}\natexlab{}.
\newblock \showarticletitle{Self-attentive sequential recommendation}. In \bibinfo{booktitle}{\emph{2018 IEEE international conference on data mining (ICDM)}}. IEEE, \bibinfo{pages}{197--206}.
\newblock


\bibitem[Li et~al\mbox{.}(2024b)]%
        {li2024aiming-cross-domain-rec}
\bibfield{author}{\bibinfo{person}{Hanyu Li}, \bibinfo{person}{Weizhi Ma}, \bibinfo{person}{Peijie Sun}, \bibinfo{person}{Jiayu Li}, \bibinfo{person}{Cunxiang Yin}, \bibinfo{person}{Yancheng He}, \bibinfo{person}{Guoqiang Xu}, \bibinfo{person}{Min Zhang}, {and} \bibinfo{person}{Shaoping Ma}.} \bibinfo{year}{2024}\natexlab{b}.
\newblock \showarticletitle{Aiming at the target: Filter collaborative information for cross-domain recommendation}. In \bibinfo{booktitle}{\emph{Proceedings of the 47th International ACM SIGIR Conference on Research and Development in Information Retrieval}}. \bibinfo{pages}{2081--2090}.
\newblock


\bibitem[Li et~al\mbox{.}(2022b)]%
        {li2022adaptdhm-multi-domain-rec}
\bibfield{author}{\bibinfo{person}{Jinyun Li}, \bibinfo{person}{Huiwen Zheng}, \bibinfo{person}{Yuanlin Liu}, \bibinfo{person}{Minfang Lu}, \bibinfo{person}{Lixia Wu}, {and} \bibinfo{person}{Haoyuan Hu}.} \bibinfo{year}{2022}\natexlab{b}.
\newblock \showarticletitle{AdaptDHM: Adaptive Distribution Hierarchical Model for Multi-Domain CTR Prediction}.
\newblock \bibinfo{journal}{\emph{arXiv preprint arXiv:2211.12105}} (\bibinfo{year}{2022}).
\newblock


\bibitem[Li et~al\mbox{.}(2023b)]%
        {li2023adl}
\bibfield{author}{\bibinfo{person}{Jinyun Li}, \bibinfo{person}{Huiwen Zheng}, \bibinfo{person}{Yuanlin Liu}, \bibinfo{person}{Minfang Lu}, \bibinfo{person}{Lixia Wu}, {and} \bibinfo{person}{Haoyuan Hu}.} \bibinfo{year}{2023}\natexlab{b}.
\newblock \showarticletitle{ADL: Adaptive Distribution Learning Framework for Multi-Scenario CTR Prediction}. In \bibinfo{booktitle}{\emph{Proceedings of the 46th International ACM SIGIR Conference on Research and Development in Information Retrieval}}. \bibinfo{pages}{1786--1790}.
\newblock


\bibitem[Li et~al\mbox{.}(2021)]%
        {liprototypical}
\bibfield{author}{\bibinfo{person}{Junnan Li}, \bibinfo{person}{Pan Zhou}, \bibinfo{person}{Caiming Xiong}, {and} \bibinfo{person}{Steven C.~H. Hoi}.} \bibinfo{year}{2021}\natexlab{}.
\newblock \showarticletitle{Prototypical Contrastive Learning of Unsupervised Representations}. In \bibinfo{booktitle}{\emph{9th International Conference on Learning Representations}}.
\newblock


\bibitem[Li et~al\mbox{.}(2024a)]%
        {li2024scenario-multi-domain-rec-xiaopeng}
\bibfield{author}{\bibinfo{person}{Xiaopeng Li}, \bibinfo{person}{Jingtong Gao}, \bibinfo{person}{Pengyue Jia}, \bibinfo{person}{Yichao Wang}, \bibinfo{person}{Wanyu Wang}, \bibinfo{person}{Yejing Wang}, \bibinfo{person}{Yuhao Wang}, \bibinfo{person}{Huifeng Guo}, {and} \bibinfo{person}{Ruiming Tang}.} \bibinfo{year}{2024}\natexlab{a}.
\newblock \showarticletitle{Scenario-Wise Rec: A Multi-Scenario Recommendation Benchmark}.
\newblock \bibinfo{journal}{\emph{arXiv preprint arXiv:2412.17374}} (\bibinfo{year}{2024}).
\newblock


\bibitem[Li et~al\mbox{.}(2022a)]%
        {li2022gromov-cross-domain-rec}
\bibfield{author}{\bibinfo{person}{Xinhang Li}, \bibinfo{person}{Zhaopeng Qiu}, \bibinfo{person}{Xiangyu Zhao}, \bibinfo{person}{Zihao Wang}, \bibinfo{person}{Yong Zhang}, \bibinfo{person}{Chunxiao Xing}, {and} \bibinfo{person}{Xian Wu}.} \bibinfo{year}{2022}\natexlab{a}.
\newblock \showarticletitle{Gromov-wasserstein guided representation learning for cross-domain recommendation}. In \bibinfo{booktitle}{\emph{Proceedings of the 31st ACM International Conference on Information \& Knowledge Management}}. \bibinfo{pages}{1199--1208}.
\newblock


\bibitem[Li et~al\mbox{.}(2023a)]%
        {li2023hamur-multi-domain-rec}
\bibfield{author}{\bibinfo{person}{Xiaopeng Li}, \bibinfo{person}{Fan Yan}, \bibinfo{person}{Xiangyu Zhao}, \bibinfo{person}{Yichao Wang}, \bibinfo{person}{Bo Chen}, \bibinfo{person}{Huifeng Guo}, {and} \bibinfo{person}{Ruiming Tang}.} \bibinfo{year}{2023}\natexlab{a}.
\newblock \showarticletitle{Hamur: Hyper adapter for multi-domain recommendation}. In \bibinfo{booktitle}{\emph{Proceedings of the 32nd ACM International Conference on Information and Knowledge Management}}. \bibinfo{pages}{1268--1277}.
\newblock


\bibitem[Liu et~al\mbox{.}(2022)]%
        {liu2022understanding}
\bibfield{author}{\bibinfo{person}{Fanghui Liu}, \bibinfo{person}{Luca Viano}, {and} \bibinfo{person}{Volkan Cevher}.} \bibinfo{year}{2022}\natexlab{}.
\newblock \showarticletitle{Understanding deep neural function approximation in reinforcement learning via $epsilon$-greedy exploration}.
\newblock \bibinfo{journal}{\emph{Advances in Neural Information Processing Systems}}  \bibinfo{volume}{35} (\bibinfo{year}{2022}), \bibinfo{pages}{5093--5108}.
\newblock


\bibitem[Liu et~al\mbox{.}(2024a)]%
        {lqd2}
\bibfield{author}{\bibinfo{person}{Qidong Liu}, \bibinfo{person}{Xian Wu}, \bibinfo{person}{Yejing Wang}, \bibinfo{person}{Zijian Zhang}, \bibinfo{person}{Feng Tian}, \bibinfo{person}{Yefeng Zheng}, {and} \bibinfo{person}{Xiangyu Zhao}.} \bibinfo{year}{2024}\natexlab{a}.
\newblock \showarticletitle{Llm-esr: Large language models enhancement for long-tailed sequential recommendation}.
\newblock \bibinfo{journal}{\emph{Advances in Neural Information Processing Systems}}  \bibinfo{volume}{37} (\bibinfo{year}{2024}), \bibinfo{pages}{26701--26727}.
\newblock


\bibitem[Liu et~al\mbox{.}(2023a)]%
        {lqd1}
\bibfield{author}{\bibinfo{person}{Qidong Liu}, \bibinfo{person}{Fan Yan}, \bibinfo{person}{Xiangyu Zhao}, \bibinfo{person}{Zhaocheng Du}, \bibinfo{person}{Huifeng Guo}, \bibinfo{person}{Ruiming Tang}, {and} \bibinfo{person}{Feng Tian}.} \bibinfo{year}{2023}\natexlab{a}.
\newblock \showarticletitle{Diffusion augmentation for sequential recommendation}. In \bibinfo{booktitle}{\emph{Proceedings of the 32nd ACM International conference on information and knowledge management}}. \bibinfo{pages}{1576--1586}.
\newblock


\bibitem[Liu et~al\mbox{.}(2023b)]%
        {liu2023deep}
\bibfield{author}{\bibinfo{person}{Qi Liu}, \bibinfo{person}{Zhilong Zhou}, \bibinfo{person}{Gangwei Jiang}, \bibinfo{person}{Tiezheng Ge}, {and} \bibinfo{person}{Defu Lian}.} \bibinfo{year}{2023}\natexlab{b}.
\newblock \showarticletitle{Deep task-specific bottom representation network for multi-task recommendation}. In \bibinfo{booktitle}{\emph{Proceedings of the 32nd ACM International Conference on Information and Knowledge Management}}. \bibinfo{pages}{1637--1646}.
\newblock


\bibitem[Liu et~al\mbox{.}(2024b)]%
        {liuyue_Rec1}
\bibfield{author}{\bibinfo{person}{Yue Liu}, \bibinfo{person}{Shihao Zhu}, \bibinfo{person}{Jun Xia}, \bibinfo{person}{Yingwei Ma}, \bibinfo{person}{Jian Ma}, \bibinfo{person}{Wenliang Zhong}, \bibinfo{person}{Xinwang Liu}, \bibinfo{person}{Shengju Yu}, {and} \bibinfo{person}{Kejun Zhang}.} \bibinfo{year}{2024}\natexlab{b}.
\newblock \showarticletitle{End-to-end Learnable Clustering for Intent Learning in Recommendation}. In \bibinfo{booktitle}{\emph{Proc. of NeurIPS}}.
\newblock


\bibitem[Liu et~al\mbox{.}(2024c)]%
        {liuyue_rec2}
\bibfield{author}{\bibinfo{person}{Yue Liu}, \bibinfo{person}{Shihao Zhu}, \bibinfo{person}{Tianyuan Yang}, \bibinfo{person}{Jian Ma}, {and} \bibinfo{person}{Wenliang Zhong}.} \bibinfo{year}{2024}\natexlab{c}.
\newblock \showarticletitle{Identify Then Recommend: Towards Unsupervised Group Recommendation}. In \bibinfo{booktitle}{\emph{Proc. of NeurIPS}}.
\newblock


\bibitem[Luo et~al\mbox{.}(2024)]%
        {luo2024one-AREAD-multi-domain-rec}
\bibfield{author}{\bibinfo{person}{Huishi Luo}, \bibinfo{person}{Yiwen Chen}, \bibinfo{person}{Yiqing Wu}, \bibinfo{person}{Fuzhen Zhuang}, {and} \bibinfo{person}{Deqing Wang}.} \bibinfo{year}{2024}\natexlab{}.
\newblock \showarticletitle{One for Dozens: Adaptive REcommendation for All Domains with Counterfactual Augmentation}.
\newblock \bibinfo{journal}{\emph{arXiv preprint arXiv:2412.11905}} (\bibinfo{year}{2024}).
\newblock


\bibitem[Luo et~al\mbox{.}(2023)]%
        {luo2023mamdr-multi-domain-rec}
\bibfield{author}{\bibinfo{person}{Linhao Luo}, \bibinfo{person}{Yumeng Li}, \bibinfo{person}{Buyu Gao}, \bibinfo{person}{Shuai Tang}, \bibinfo{person}{Sinan Wang}, \bibinfo{person}{Jiancheng Li}, \bibinfo{person}{Tanchao Zhu}, \bibinfo{person}{Jiancai Liu}, \bibinfo{person}{Zhao Li}, {and} \bibinfo{person}{Shirui Pan}.} \bibinfo{year}{2023}\natexlab{}.
\newblock \showarticletitle{MAMDR: A model agnostic learning framework for multi-domain recommendation}. In \bibinfo{booktitle}{\emph{2023 IEEE 39th International Conference on Data Engineering (ICDE)}}. IEEE, \bibinfo{pages}{3079--3092}.
\newblock


\bibitem[Ma et~al\mbox{.}(2018b)]%
        {ma2018mmoe-multi-task-rec}
\bibfield{author}{\bibinfo{person}{Jiaqi Ma}, \bibinfo{person}{Zhe Zhao}, \bibinfo{person}{Xinyang Yi}, \bibinfo{person}{Jilin Chen}, \bibinfo{person}{Lichan Hong}, {and} \bibinfo{person}{Ed~H Chi}.} \bibinfo{year}{2018}\natexlab{b}.
\newblock \showarticletitle{Modeling task relationships in multi-task learning with multi-gate mixture-of-experts}. In \bibinfo{booktitle}{\emph{Proceedings of the 24th ACM SIGKDD international conference on knowledge discovery \& data mining}}. \bibinfo{pages}{1930--1939}.
\newblock


\bibitem[Ma et~al\mbox{.}(2018a)]%
        {ma2018entire}
\bibfield{author}{\bibinfo{person}{Xiao Ma}, \bibinfo{person}{Liqin Zhao}, \bibinfo{person}{Guan Huang}, \bibinfo{person}{Zhi Wang}, \bibinfo{person}{Zelin Hu}, \bibinfo{person}{Xiaoqiang Zhu}, {and} \bibinfo{person}{Kun Gai}.} \bibinfo{year}{2018}\natexlab{a}.
\newblock \showarticletitle{Entire space multi-task model: An effective approach for estimating post-click conversion rate}. In \bibinfo{booktitle}{\emph{The 41st International ACM SIGIR Conference on Research \& Development in Information Retrieval}}. \bibinfo{pages}{1137--1140}.
\newblock


\bibitem[Ni et~al\mbox{.}(2019)]%
        {ni2019justifying}
\bibfield{author}{\bibinfo{person}{Jianmo Ni}, \bibinfo{person}{Jiacheng Li}, {and} \bibinfo{person}{Julian McAuley}.} \bibinfo{year}{2019}\natexlab{}.
\newblock \showarticletitle{Justifying recommendations using distantly-labeled reviews and fine-grained aspects}. In \bibinfo{booktitle}{\emph{Proceedings of the 2019 conference on empirical methods in natural language processing and the 9th international joint conference on natural language processing (EMNLP-IJCNLP)}}. \bibinfo{pages}{188--197}.
\newblock


\bibitem[Ning et~al\mbox{.}(2023)]%
        {ning2023multi-multi-domain-graph-rec}
\bibfield{author}{\bibinfo{person}{Wentao Ning}, \bibinfo{person}{Xiao Yan}, \bibinfo{person}{Weiwen Liu}, \bibinfo{person}{Reynold Cheng}, \bibinfo{person}{Rui Zhang}, {and} \bibinfo{person}{Bo Tang}.} \bibinfo{year}{2023}\natexlab{}.
\newblock \showarticletitle{Multi-domain recommendation with embedding disentangling and domain alignment}. In \bibinfo{booktitle}{\emph{Proceedings of the 32nd ACM International Conference on Information and Knowledge Management}}. \bibinfo{pages}{1917--1927}.
\newblock


\bibitem[Park et~al\mbox{.}(2024)]%
        {park2024pacer-cross-domain-rec}
\bibfield{author}{\bibinfo{person}{Chung Park}, \bibinfo{person}{Taesan Kim}, \bibinfo{person}{Hyungjun Yoon}, \bibinfo{person}{Junui Hong}, \bibinfo{person}{Yelim Yu}, \bibinfo{person}{Mincheol Cho}, \bibinfo{person}{Minsung Choi}, {and} \bibinfo{person}{Jaegul Choo}.} \bibinfo{year}{2024}\natexlab{}.
\newblock \showarticletitle{Pacer and Runner: Cooperative Learning Framework between Single-and Cross-Domain Sequential Recommendation}. In \bibinfo{booktitle}{\emph{Proceedings of the 47th International ACM SIGIR Conference on Research and Development in Information Retrieval}}. \bibinfo{pages}{2071--2080}.
\newblock


\bibitem[Schafer et~al\mbox{.}(2001)]%
        {schafer2001commerce-rec}
\bibfield{author}{\bibinfo{person}{J~Ben Schafer}, \bibinfo{person}{Joseph~A Konstan}, {and} \bibinfo{person}{John Riedl}.} \bibinfo{year}{2001}\natexlab{}.
\newblock \showarticletitle{E-commerce recommendation applications}.
\newblock \bibinfo{journal}{\emph{Data mining and knowledge discovery}}  \bibinfo{volume}{5} (\bibinfo{year}{2001}), \bibinfo{pages}{115--153}.
\newblock


\bibitem[Shen et~al\mbox{.}(2021)]%
        {shen2021sarnet}
\bibfield{author}{\bibinfo{person}{Qijie Shen}, \bibinfo{person}{Wanjie Tao}, \bibinfo{person}{Jing Zhang}, \bibinfo{person}{Hong Wen}, \bibinfo{person}{Zulong Chen}, {and} \bibinfo{person}{Quan Lu}.} \bibinfo{year}{2021}\natexlab{}.
\newblock \showarticletitle{SAR-Net: A scenario-aware ranking network for personalized fair recommendation in hundreds of travel scenarios}. In \bibinfo{booktitle}{\emph{Proceedings of the 30th ACM International Conference on Information \& Knowledge Management}}. \bibinfo{pages}{4094--4103}.
\newblock


\bibitem[Sheng et~al\mbox{.}(2021)]%
        {sheng2021star-multi-domain-rec}
\bibfield{author}{\bibinfo{person}{Xiang-Rong Sheng}, \bibinfo{person}{Liqin Zhao}, \bibinfo{person}{Guorui Zhou}, \bibinfo{person}{Xinyao Ding}, \bibinfo{person}{Binding Dai}, \bibinfo{person}{Qiang Luo}, \bibinfo{person}{Siran Yang}, \bibinfo{person}{Jingshan Lv}, \bibinfo{person}{Chi Zhang}, \bibinfo{person}{Hongbo Deng}, {et~al\mbox{.}}} \bibinfo{year}{2021}\natexlab{}.
\newblock \showarticletitle{One model to serve all: Star topology adaptive recommender for multi-domain ctr prediction}. In \bibinfo{booktitle}{\emph{Proceedings of the 30th ACM International Conference on Information \& Knowledge Management}}. \bibinfo{pages}{4104--4113}.
\newblock


\bibitem[Song et~al\mbox{.}(2019)]%
        {song2019autoint}
\bibfield{author}{\bibinfo{person}{Weiping Song}, \bibinfo{person}{Chence Shi}, \bibinfo{person}{Zhiping Xiao}, \bibinfo{person}{Zhijian Duan}, \bibinfo{person}{Yewen Xu}, \bibinfo{person}{Ming Zhang}, {and} \bibinfo{person}{Jian Tang}.} \bibinfo{year}{2019}\natexlab{}.
\newblock \showarticletitle{Autoint: Automatic feature interaction learning via self-attentive neural networks}. In \bibinfo{booktitle}{\emph{Proceedings of the 28th ACM international conference on information and knowledge management}}. \bibinfo{pages}{1161--1170}.
\newblock


\bibitem[Standley et~al\mbox{.}(2020)]%
        {standley2020tasks-multi-task}
\bibfield{author}{\bibinfo{person}{Trevor Standley}, \bibinfo{person}{Amir Zamir}, \bibinfo{person}{Dawn Chen}, \bibinfo{person}{Leonidas Guibas}, \bibinfo{person}{Jitendra Malik}, {and} \bibinfo{person}{Silvio Savarese}.} \bibinfo{year}{2020}\natexlab{}.
\newblock \showarticletitle{Which tasks should be learned together in multi-task learning?}. In \bibinfo{booktitle}{\emph{International conference on machine learning}}. PMLR, \bibinfo{pages}{9120--9132}.
\newblock


\bibitem[Sun et~al\mbox{.}(2019)]%
        {sun2019bert4rec-seq-rec}
\bibfield{author}{\bibinfo{person}{Fei Sun}, \bibinfo{person}{Jun Liu}, \bibinfo{person}{Jian Wu}, \bibinfo{person}{Changhua Pei}, \bibinfo{person}{Xiao Lin}, \bibinfo{person}{Wenwu Ou}, {and} \bibinfo{person}{Peng Jiang}.} \bibinfo{year}{2019}\natexlab{}.
\newblock \showarticletitle{BERT4Rec: Sequential recommendation with bidirectional encoder representations from transformer}. In \bibinfo{booktitle}{\emph{Proceedings of the 28th ACM international conference on information and knowledge management}}. \bibinfo{pages}{1441--1450}.
\newblock


\bibitem[Sun et~al\mbox{.}(2020)]%
        {sun2020learning}
\bibfield{author}{\bibinfo{person}{Tianxiang Sun}, \bibinfo{person}{Yunfan Shao}, \bibinfo{person}{Xiaonan Li}, \bibinfo{person}{Pengfei Liu}, \bibinfo{person}{Hang Yan}, \bibinfo{person}{Xipeng Qiu}, {and} \bibinfo{person}{Xuanjing Huang}.} \bibinfo{year}{2020}\natexlab{}.
\newblock \showarticletitle{Learning sparse sharing architectures for multiple tasks}. In \bibinfo{booktitle}{\emph{Proceedings of the AAAI conference on artificial intelligence}}, Vol.~\bibinfo{volume}{34}. \bibinfo{pages}{8936--8943}.
\newblock


\bibitem[Tang et~al\mbox{.}(2020)]%
        {tang2020ple-multi-task-rec}
\bibfield{author}{\bibinfo{person}{Hongyan Tang}, \bibinfo{person}{Junning Liu}, \bibinfo{person}{Ming Zhao}, {and} \bibinfo{person}{Xudong Gong}.} \bibinfo{year}{2020}\natexlab{}.
\newblock \showarticletitle{Progressive layered extraction (ple): A novel multi-task learning (mtl) model for personalized recommendations}. In \bibinfo{booktitle}{\emph{Proceedings of the 14th ACM Conference on Recommender Systems}}. \bibinfo{pages}{269--278}.
\newblock


\bibitem[Tong et~al\mbox{.}(2024)]%
        {tong2024mdap}
\bibfield{author}{\bibinfo{person}{Junxiong Tong}, \bibinfo{person}{Mingjia Yin}, \bibinfo{person}{Hao Wang}, \bibinfo{person}{Qiushi Pan}, \bibinfo{person}{Defu Lian}, {and} \bibinfo{person}{Enhong Chen}.} \bibinfo{year}{2024}\natexlab{}.
\newblock \showarticletitle{MDAP: A Multi-view Disentangled and Adaptive Preference Learning Framework for Cross-Domain Recommendation}. In \bibinfo{booktitle}{\emph{International Conference on Web Information Systems Engineering}}. Springer, \bibinfo{pages}{164--178}.
\newblock


\bibitem[Urdaneta-Ponte et~al\mbox{.}(2021)]%
        {urdaneta2021recommendation-education-rec}
\bibfield{author}{\bibinfo{person}{Mar{\'\i}a~Cora Urdaneta-Ponte}, \bibinfo{person}{Amaia Mendez-Zorrilla}, {and} \bibinfo{person}{Ibon Oleagordia-Ruiz}.} \bibinfo{year}{2021}\natexlab{}.
\newblock \showarticletitle{Recommendation systems for education: Systematic review}.
\newblock \bibinfo{journal}{\emph{Electronics}} \bibinfo{volume}{10}, \bibinfo{number}{14} (\bibinfo{year}{2021}), \bibinfo{pages}{1611}.
\newblock


\bibitem[Vajjala et~al\mbox{.}(2024)]%
        {dom_sim}
\bibfield{author}{\bibinfo{person}{Ajay~Krishna Vajjala}, \bibinfo{person}{Arun~Krishna Vajjala}, \bibinfo{person}{Ziwei Zhu}, {and} \bibinfo{person}{David~S Rosenblum}.} \bibinfo{year}{2024}\natexlab{}.
\newblock \showarticletitle{Analyzing the Impact of Domain Similarity: A New Perspective in Cross-Domain Recommendation}. In \bibinfo{booktitle}{\emph{2024 International Joint Conference on Neural Networks (IJCNN)}}. IEEE, \bibinfo{pages}{1--8}.
\newblock


\bibitem[Wan et~al\mbox{.}(2024)]%
        {wxh1}
\bibfield{author}{\bibinfo{person}{Xinhang Wan}, \bibinfo{person}{Jiyuan Liu}, \bibinfo{person}{Xinbiao Gan}, \bibinfo{person}{Xinwang Liu}, \bibinfo{person}{Siwei Wang}, \bibinfo{person}{Yi Wen}, \bibinfo{person}{Tianjiao Wan}, {and} \bibinfo{person}{En Zhu}.} \bibinfo{year}{2024}\natexlab{}.
\newblock \showarticletitle{One-Step Multi-View Clustering With Diverse Representation}.
\newblock \bibinfo{journal}{\emph{IEEE Transactions on Neural Networks and Learning Systems}} (\bibinfo{year}{2024}), \bibinfo{pages}{1--13}.
\newblock
\urldef\tempurl%
\url{https://doi.org/10.1109/TNNLS.2024.3378194}
\showDOI{\tempurl}


\bibitem[Wang et~al\mbox{.}(2017a)]%
        {wang2017deep&cross-rec}
\bibfield{author}{\bibinfo{person}{Ruoxi Wang}, \bibinfo{person}{Bin Fu}, \bibinfo{person}{Gang Fu}, {and} \bibinfo{person}{Mingliang Wang}.} \bibinfo{year}{2017}\natexlab{a}.
\newblock \showarticletitle{Deep \& cross network for ad click predictions}.
\newblock In \bibinfo{booktitle}{\emph{Proceedings of the ADKDD'17}}. \bibinfo{pages}{1--7}.
\newblock


\bibitem[Wang et~al\mbox{.}(2017b)]%
        {wang2017item}
\bibfield{author}{\bibinfo{person}{Xiang Wang}, \bibinfo{person}{Xiangnan He}, \bibinfo{person}{Liqiang Nie}, {and} \bibinfo{person}{Tat-Seng Chua}.} \bibinfo{year}{2017}\natexlab{b}.
\newblock \showarticletitle{Item silk road: Recommending items from information domains to social users}. In \bibinfo{booktitle}{\emph{Proceedings of the 40th International ACM SIGIR conference on Research and Development in Information Retrieval}}. \bibinfo{pages}{185--194}.
\newblock


\bibitem[Wang et~al\mbox{.}(2019)]%
        {wang2019ngcf-graph-rec}
\bibfield{author}{\bibinfo{person}{Xiang Wang}, \bibinfo{person}{Xiangnan He}, \bibinfo{person}{Meng Wang}, \bibinfo{person}{Fuli Feng}, {and} \bibinfo{person}{Tat-Seng Chua}.} \bibinfo{year}{2019}\natexlab{}.
\newblock \showarticletitle{Neural graph collaborative filtering}. In \bibinfo{booktitle}{\emph{Proceedings of the 42nd international ACM SIGIR conference on Research and development in Information Retrieval}}. \bibinfo{pages}{165--174}.
\newblock


\bibitem[Wang et~al\mbox{.}(2024)]%
        {wang2024diff-cold-multi-domain-rec}
\bibfield{author}{\bibinfo{person}{Yuhao Wang}, \bibinfo{person}{Ziru Liu}, \bibinfo{person}{Yichao Wang}, \bibinfo{person}{Xiangyu Zhao}, \bibinfo{person}{Bo Chen}, \bibinfo{person}{Huifeng Guo}, {and} \bibinfo{person}{Ruiming Tang}.} \bibinfo{year}{2024}\natexlab{}.
\newblock \showarticletitle{Diff-MSR: A diffusion model enhanced paradigm for cold-start multi-scenario recommendation}. In \bibinfo{booktitle}{\emph{Proceedings of the 17th ACM International Conference on Web Search and Data Mining}}. \bibinfo{pages}{779--787}.
\newblock


\bibitem[Wang et~al\mbox{.}(2023)]%
        {wang2023plate-multi-domain-rec}
\bibfield{author}{\bibinfo{person}{Yuhao Wang}, \bibinfo{person}{Xiangyu Zhao}, \bibinfo{person}{Bo Chen}, \bibinfo{person}{Qidong Liu}, \bibinfo{person}{Huifeng Guo}, \bibinfo{person}{Huanshuo Liu}, \bibinfo{person}{Yichao Wang}, \bibinfo{person}{Rui Zhang}, {and} \bibinfo{person}{Ruiming Tang}.} \bibinfo{year}{2023}\natexlab{}.
\newblock \showarticletitle{PLATE: A prompt-enhanced paradigm for multi-scenario recommendations}. In \bibinfo{booktitle}{\emph{Proceedings of the 46th International ACM SIGIR Conference on Research and Development in Information Retrieval}}. \bibinfo{pages}{1498--1507}.
\newblock


\bibitem[Xia et~al\mbox{.}(2021)]%
        {xia2021knowledge}
\bibfield{author}{\bibinfo{person}{Lianghao Xia}, \bibinfo{person}{Chao Huang}, \bibinfo{person}{Yong Xu}, \bibinfo{person}{Peng Dai}, \bibinfo{person}{Xiyue Zhang}, \bibinfo{person}{Hongsheng Yang}, \bibinfo{person}{Jian Pei}, {and} \bibinfo{person}{Liefeng Bo}.} \bibinfo{year}{2021}\natexlab{}.
\newblock \showarticletitle{Knowledge-enhanced hierarchical graph transformer network for multi-behavior recommendation}. In \bibinfo{booktitle}{\emph{Proceedings of the AAAI conference on artificial intelligence}}, Vol.~\bibinfo{volume}{35}. \bibinfo{pages}{4486--4493}.
\newblock


\bibitem[Yan et~al\mbox{.}(2022)]%
        {yan2022apg-rec}
\bibfield{author}{\bibinfo{person}{Bencheng Yan}, \bibinfo{person}{Pengjie Wang}, \bibinfo{person}{Kai Zhang}, \bibinfo{person}{Feng Li}, \bibinfo{person}{Hongbo Deng}, \bibinfo{person}{Jian Xu}, {and} \bibinfo{person}{Bo Zheng}.} \bibinfo{year}{2022}\natexlab{}.
\newblock \showarticletitle{Apg: Adaptive parameter generation network for click-through rate prediction}.
\newblock \bibinfo{journal}{\emph{Advances in Neural Information Processing Systems}}  \bibinfo{volume}{35} (\bibinfo{year}{2022}), \bibinfo{pages}{24740--24752}.
\newblock


\bibitem[Yang et~al\mbox{.}(2022)]%
        {yang2022adasparse_multi-domain-rec}
\bibfield{author}{\bibinfo{person}{Xuanhua Yang}, \bibinfo{person}{Xiaoyu Peng}, \bibinfo{person}{Penghui Wei}, \bibinfo{person}{Shaoguo Liu}, \bibinfo{person}{Liang Wang}, {and} \bibinfo{person}{Bo Zheng}.} \bibinfo{year}{2022}\natexlab{}.
\newblock \showarticletitle{Adasparse: Learning adaptively sparse structures for multi-domain click-through rate prediction}. In \bibinfo{booktitle}{\emph{Proceedings of the 31st ACM International Conference on Information \& Knowledge Management}}. \bibinfo{pages}{4635--4639}.
\newblock


\bibitem[Yang et~al\mbox{.}(2024)]%
        {yang2024generate}
\bibfield{author}{\bibinfo{person}{Zhengyi Yang}, \bibinfo{person}{Jiancan Wu}, \bibinfo{person}{Zhicai Wang}, \bibinfo{person}{Xiang Wang}, \bibinfo{person}{Yancheng Yuan}, {and} \bibinfo{person}{Xiangnan He}.} \bibinfo{year}{2024}\natexlab{}.
\newblock \showarticletitle{Generate what you prefer: Reshaping sequential recommendation via guided diffusion}.
\newblock \bibinfo{journal}{\emph{Advances in Neural Information Processing Systems}}  \bibinfo{volume}{36} (\bibinfo{year}{2024}).
\newblock


\bibitem[Yu et~al\mbox{.}(2023)]%
        {ysj}
\bibfield{author}{\bibinfo{person}{Shengju Yu}, \bibinfo{person}{Siwei Wang}, \bibinfo{person}{Yi Wen}, \bibinfo{person}{Ziming Wang}, \bibinfo{person}{Zhigang Luo}, \bibinfo{person}{En Zhu}, {and} \bibinfo{person}{Xinwang Liu}.} \bibinfo{year}{2023}\natexlab{}.
\newblock \showarticletitle{How to construct corresponding anchors for incomplete multiview clustering}.
\newblock \bibinfo{journal}{\emph{IEEE Transactions on Circuits and Systems for Video Technology}} \bibinfo{volume}{34}, \bibinfo{number}{4} (\bibinfo{year}{2023}), \bibinfo{pages}{2845--2860}.
\newblock


\bibitem[Zhang et~al\mbox{.}(2022b)]%
        {zhang2022m2m-multi-domain-multi-task-rec}
\bibfield{author}{\bibinfo{person}{Qianqian Zhang}, \bibinfo{person}{Xinru Liao}, \bibinfo{person}{Quan Liu}, \bibinfo{person}{Jian Xu}, {and} \bibinfo{person}{Bo Zheng}.} \bibinfo{year}{2022}\natexlab{b}.
\newblock \showarticletitle{Leaving no one behind: A multi-scenario multi-task meta learning approach for advertiser modeling}. In \bibinfo{booktitle}{\emph{Proceedings of the Fifteenth ACM International Conference on Web Search and Data Mining}}. \bibinfo{pages}{1368--1376}.
\newblock


\bibitem[Zhang et~al\mbox{.}(2022a)]%
        {zhang2022survey}
\bibfield{author}{\bibinfo{person}{Wen Zhang}, \bibinfo{person}{Lingfei Deng}, \bibinfo{person}{Lei Zhang}, {and} \bibinfo{person}{Dongrui Wu}.} \bibinfo{year}{2022}\natexlab{a}.
\newblock \showarticletitle{A survey on negative transfer}.
\newblock \bibinfo{journal}{\emph{IEEE/CAA Journal of Automatica Sinica}} \bibinfo{volume}{10}, \bibinfo{number}{2} (\bibinfo{year}{2022}), \bibinfo{pages}{305--329}.
\newblock


\bibitem[Zhao et~al\mbox{.}(2018a)]%
        {zhao2018deep-rl-rec}
\bibfield{author}{\bibinfo{person}{Xiangyu Zhao}, \bibinfo{person}{Long Xia}, \bibinfo{person}{Liang Zhang}, \bibinfo{person}{Zhuoye Ding}, \bibinfo{person}{Dawei Yin}, {and} \bibinfo{person}{Jiliang Tang}.} \bibinfo{year}{2018}\natexlab{a}.
\newblock \showarticletitle{Deep reinforcement learning for page-wise recommendations}. In \bibinfo{booktitle}{\emph{Proceedings of the 12th ACM conference on recommender systems}}. \bibinfo{pages}{95--103}.
\newblock


\bibitem[Zhao et~al\mbox{.}(2018b)]%
        {zhao2018recommendations-rec}
\bibfield{author}{\bibinfo{person}{Xiangyu Zhao}, \bibinfo{person}{Liang Zhang}, \bibinfo{person}{Zhuoye Ding}, \bibinfo{person}{Long Xia}, \bibinfo{person}{Jiliang Tang}, {and} \bibinfo{person}{Dawei Yin}.} \bibinfo{year}{2018}\natexlab{b}.
\newblock \showarticletitle{Recommendations with negative feedback via pairwise deep reinforcement learning}. In \bibinfo{booktitle}{\emph{Proceedings of the 24th ACM SIGKDD international conference on knowledge discovery \& data mining}}. \bibinfo{pages}{1040--1048}.
\newblock


\bibitem[Zhou et~al\mbox{.}(2018)]%
        {zhou2018din-interest-rec}
\bibfield{author}{\bibinfo{person}{Guorui Zhou}, \bibinfo{person}{Xiaoqiang Zhu}, \bibinfo{person}{Chenru Song}, \bibinfo{person}{Ying Fan}, \bibinfo{person}{Han Zhu}, \bibinfo{person}{Xiao Ma}, \bibinfo{person}{Yanghui Yan}, \bibinfo{person}{Junqi Jin}, \bibinfo{person}{Han Li}, {and} \bibinfo{person}{Kun Gai}.} \bibinfo{year}{2018}\natexlab{}.
\newblock \showarticletitle{Deep interest network for click-through rate prediction}. In \bibinfo{booktitle}{\emph{Proceedings of the 24th ACM SIGKDD international conference on knowledge discovery \& data mining}}. \bibinfo{pages}{1059--1068}.
\newblock


\bibitem[Zhou et~al\mbox{.}(2022)]%
        {zhou2022filter}
\bibfield{author}{\bibinfo{person}{Kun Zhou}, \bibinfo{person}{Hui Yu}, \bibinfo{person}{Wayne~Xin Zhao}, {and} \bibinfo{person}{Ji-Rong Wen}.} \bibinfo{year}{2022}\natexlab{}.
\newblock \showarticletitle{Filter-enhanced MLP is all you need for sequential recommendation}. In \bibinfo{booktitle}{\emph{Proceedings of the ACM web conference 2022}}. \bibinfo{pages}{2388--2399}.
\newblock


\bibitem[Zhu et~al\mbox{.}(2022)]%
        {zhu2022user}
\bibfield{author}{\bibinfo{person}{Chenxu Zhu}, \bibinfo{person}{Peng Du}, \bibinfo{person}{Xianghui Zhu}, \bibinfo{person}{Weinan Zhang}, \bibinfo{person}{Yong Yu}, {and} \bibinfo{person}{Yang Cao}.} \bibinfo{year}{2022}\natexlab{}.
\newblock \showarticletitle{User-tag profile modeling in recommendation system via contrast weighted tag masking}. In \bibinfo{booktitle}{\emph{Proceedings of the 28th ACM SIGKDD Conference on Knowledge Discovery and Data Mining}}. \bibinfo{pages}{4630--4638}.
\newblock


\bibitem[Zhu et~al\mbox{.}(2024)]%
        {zhu2024m}
\bibfield{author}{\bibinfo{person}{Jiachen Zhu}, \bibinfo{person}{Yichao Wang}, \bibinfo{person}{Jianghao Lin}, \bibinfo{person}{Jiarui Qin}, \bibinfo{person}{Ruiming Tang}, \bibinfo{person}{Weinan Zhang}, {and} \bibinfo{person}{Yong Yu}.} \bibinfo{year}{2024}\natexlab{}.
\newblock \showarticletitle{M-scan: A Multi-Scenario Causal-driven Adaptive Network for Recommendation}. In \bibinfo{booktitle}{\emph{Proceedings of the ACM on Web Conference 2024}}. \bibinfo{pages}{3844--3853}.
\newblock


\end{thebibliography}

\appendix
\section{Theorem Proof}
\label{appendix_proof}
To proof the Theorem \ref{select_theorem}, we first need proof the following lemma:
\begin{lemma} \cite{he2024efficient-multi-modal}
\label{other_lemma}
For any domain subset $\mathbf{D} \subseteq \mathcal{D}$,  we have
$$
H(\mathbf{Y} \mid \mathbf{X}^\mathbf{D})=\inf _f \mathbb{E}_p[\ell_{\mathrm{CE}}(f(\mathbf{X}^\mathbf{D}), \mathbf{Y})]
$$
\end{lemma}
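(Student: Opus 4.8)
\textbf{Proof plan for Lemma~\ref{other_lemma}.}
The plan is to establish the identity $H(\mathbf{Y}\mid\mathbf{X}^{\mathbf{D}})=\inf_f \mathbb{E}_p[\ell_{\mathrm{CE}}(f(\mathbf{X}^{\mathbf{D}}),\mathbf{Y})]$ by showing that the cross-entropy risk is minimized exactly by the true conditional distribution $p(\mathbf{Y}\mid\mathbf{X}^{\mathbf{D}})$, and that at this minimizer the risk equals the conditional entropy. Here one reads $\ell_{\mathrm{CE}}(f(\mathbf{x}),y)=-\log f(\mathbf{x})[y]$, where $f$ ranges over (measurable) maps from the feature space into probability distributions over the label space, so $\mathbb{E}_p[\ell_{\mathrm{CE}}(f(\mathbf{X}^{\mathbf{D}}),\mathbf{Y})]=\mathbb{E}_{\mathbf{X}^{\mathbf{D}}}\,\mathbb{E}_{\mathbf{Y}\mid\mathbf{X}^{\mathbf{D}}}[-\log f(\mathbf{X}^{\mathbf{D}})[\mathbf{Y}]]$.

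First I would condition on $\mathbf{X}^{\mathbf{D}}=\mathbf{x}$ and analyze the inner expectation pointwise. For fixed $\mathbf{x}$, write $q(\cdot)=p(\cdot\mid\mathbf{X}^{\mathbf{D}}=\mathbf{x})$ for the true conditional label distribution and let $f(\mathbf{x})=r(\cdot)$ be an arbitrary candidate distribution. Then the inner expectation is $-\sum_y q(y)\log r(y) = H(q) + \mathrm{KL}(q\,\|\,r)$, which is the standard Gibbs-type decomposition of cross-entropy into entropy plus Kullback--Leibler divergence. Since $\mathrm{KL}(q\,\|\,r)\ge 0$ with equality iff $r=q$, the pointwise minimum over $r$ is attained at $r=q$ and equals $H(q)=H(\mathbf{Y}\mid\mathbf{X}^{\mathbf{D}}=\mathbf{x})$.

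Next I would integrate over $\mathbf{x}\sim p(\mathbf{X}^{\mathbf{D}})$. Taking $f^\star(\mathbf{x})=p(\cdot\mid\mathbf{X}^{\mathbf{D}}=\mathbf{x})$ gives $\mathbb{E}_p[\ell_{\mathrm{CE}}(f^\star(\mathbf{X}^{\mathbf{D}}),\mathbf{Y})]=\mathbb{E}_{\mathbf{x}}[H(\mathbf{Y}\mid\mathbf{X}^{\mathbf{D}}=\mathbf{x})]=H(\mathbf{Y}\mid\mathbf{X}^{\mathbf{D}})$ by the definition of conditional entropy, establishing $\inf_f(\cdot)\le H(\mathbf{Y}\mid\mathbf{X}^{\mathbf{D}})$. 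For the reverse inequality, for any $f$ the pointwise bound gives $\mathbb{E}_{\mathbf{Y}\mid\mathbf{X}^{\mathbf{D}}=\mathbf{x}}[-\log f(\mathbf{x})[\mathbf{Y}]]\ge H(\mathbf{Y}\mid\mathbf{X}^{\mathbf{D}}=\mathbf{x})$ for every $\mathbf{x}$, and integrating preserves the inequality, so $\mathbb{E}_p[\ell_{\mathrm{CE}}(f(\mathbf{X}^{\mathbf{D}}),\mathbf{Y})]\ge H(\mathbf{Y}\mid\mathbf{X}^{\mathbf{D}})$ for all $f$; taking the infimum yields $\inf_f(\cdot)\ge H(\mathbf{Y}\mid\mathbf{X}^{\mathbf{D}})$. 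Combining the two inequalities gives the claim.

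The main obstacle is not the algebra, which is routine, but the measure-theoretic bookkeeping: one must argue that $f^\star$ is a legitimate element of the hypothesis class (measurability of conditional distributions, existence of a regular conditional probability) and that the infimum is genuinely attained rather than merely approached, and handle the case where $H(\mathbf{Y}\mid\mathbf{X}^{\mathbf{D}})$ or some $\mathrm{KL}$ term is infinite so that the decomposition $-\sum_y q(y)\log r(y)=H(q)+\mathrm{KL}(q\|r)$ does not involve an $\infty-\infty$. In the present setting the label space is the finite set $\{0,1\}$ (CTR prediction), so these technicalities are mild and can be dispatched quickly, but they are the only points that require care; since this lemma is quoted from~\cite{he2024efficient-multi-modal}, I would state the finite-label reduction and cite that reference for the general case.
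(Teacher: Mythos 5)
Your proposal is correct and follows essentially the same route as the paper's proof: decompose the expected cross-entropy, conditioned on $\mathbf{X}^{\mathbf{D}}$, into the conditional entropy plus a nonnegative KL term, and observe that the infimum is attained when the predicted distribution matches the true conditional $\operatorname{Pr}(\mathbf{Y}\mid\mathbf{X}^{\mathbf{D}})$. Your write-up is if anything slightly more careful (making both inequalities and the pointwise-then-integrate step explicit, and flagging the finite-label reduction), but there is no substantive difference in method.
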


\begin{proof} 
\textbf{Lemma1} 
{
\begin{align*}
& \quad \  \mathbb{E}_p[\ell_{\mathrm{CE}}(f(\mathbf{X}^\mathbf{D}), \mathbf{Y})] \\
& =-\mathbb{E}_p[\mathbb{I}(\mathbf{Y}=0) \log (1-f(\mathbf{X}^\mathbf{D}))+\mathbb{I}(\mathbf{Y}=1) \log (f(\mathbf{X}^\mathbf{D}))] \\
& =-\mathbb{E}_D \mathbb{E}_Y[\mathbb{I}(\mathbf{Y}=0) \log (1-f(\mathbf{X}^\mathbf{D}))+\mathbb{I}(\mathbf{Y}=1) \log (f(\mathbf{X}^\mathbf{D})) \mid \mathbf{X}^\mathbf{D}] \\
& =-\mathbb{E}_D[\operatorname{Pr}(\mathbf{Y}=0 \mid \mathbf{X}^\mathbf{D}) \log (1-f(\mathbf{X}^\mathbf{D}))+\operatorname{Pr}(\mathbf{Y}=1 \mid \mathbf{X}^\mathbf{D}) \log (f(\mathbf{X}^\mathbf{D}))] \\
& = \mathbb{E}_D [\operatorname{Pr}(\mathbf{Y}=0 \mid \mathbf{X}^\mathbf{D}) \log \frac{ \operatorname{Pr}(\mathbf{Y}=0 \mid \mathbf{X}^\mathbf{D})}{ \operatorname{Pr}(\mathbf{Y}=0 \mid \mathbf{X}^\mathbf{D})} - \operatorname{Pr}(\mathbf{Y}=0 \mid \mathbf{X}^\mathbf{D}) \\
& \quad  \log (1-f(\mathbf{X}^\mathbf{D})) + \operatorname{Pr}(\mathbf{Y}=1 \mid \mathbf{X}^\mathbf{D}) \log \frac{ \operatorname{Pr}(\mathbf{Y}=1 \mid \mathbf{X}^\mathbf{D})}{\operatorname{Pr}(\mathbf{Y}=1 \mid \mathbf{X}^\mathbf{D})} \\
& \quad  - \operatorname{Pr}(\mathbf{Y}=1 \mid \mathbf{X}^\mathbf{D}) \log (f(\mathbf{X}^\mathbf{D}))]\\
& = \mathbb{E}_D [\operatorname{Pr}(\mathbf{Y}=0 \mid \mathbf{X}^\mathbf{D}) \log \frac{ \operatorname{Pr}(\mathbf{Y}=0 \mid \mathbf{X}^\mathbf{D})}{ 1 - f(\mathbf{X}^\mathbf{D})} + \operatorname{Pr}(\mathbf{Y}=1 \mid \mathbf{X}^\mathbf{D})\\
& \quad \log \frac{ \operatorname{Pr}(\mathbf{Y}=1 \mid \mathbf{X}^\mathbf{D})}{f(\mathbf{X}^\mathbf{D})} - - \operatorname{Pr}(\mathbf{Y}=0 \mid \mathbf{X}^\mathbf{D}) \log (\operatorname{Pr}(\mathbf{Y}=0 \mid \mathbf{X}^\mathbf{D})) \\
&  \quad - \operatorname{Pr}(\mathbf{Y}=1 \mid \mathbf{X}^\mathbf{D}) \log (\operatorname{Pr}(\mathbf{Y}=1 \mid \mathbf{X}^\mathbf{D}))]\\
& =\mathbb{E}_D[K L(\operatorname{Pr}(\mathbf{Y} \mid \mathbf{X}^\mathbf{D}) \| \hat{f}(\mathbf{X}^\mathbf{D},\mathbf{Y}))]+H(\mathbf{Y} \mid \mathbf{X}^\mathbf{D}) \\
& \geq H(\mathbf{Y} \mid \mathbf{X}^\mathbf{D}) .
\end{align*}}

where $\hat{f}(\mathbf{X}^\mathbf{D},\mathbf{Y})$ is the prediction distribution for $\mathbf{Y}$. Specifically,  $\hat{f}(\cdot)$ is defined as
$$
\Hat{f}(\mathbf{X}^\mathbf{D}, \mathbf{Y}) = 
\begin{cases} 
    f(\mathbf{X}^\mathbf{D}) & \text{when} \ \mathbf{Y} = 1 \\
    1 - f(\mathbf{X}^\mathbf{D}) & \text{when} \ \mathbf{Y} = 0 
\end{cases}
$$
Considering $K L(\operatorname{Pr}(\mathbf{Y} \mid \mathbf{X}^\mathbf{D}) \geq 0$, the infimum of Eq.(1) is easily to be acquire when the prediction distribution $(1 - f(\mathbf{X}^\mathbf{D}), f(\mathbf{X}^\mathbf{D}))$ equals with the true label distribution $( \operatorname{Pr}(\mathbf{Y}=0 \mid \mathbf{X}^\mathbf{D}),  \operatorname{Pr}(\mathbf{Y}=1 \mid \mathbf{X}^\mathbf{D}))$.
\end{proof}

With the Lemma \ref{other_lemma}, we can easily prove the Theorem \ref{select_theorem}.
\begin{proof}
Considering $\mathbf{Z}^i = g^j(\mathbf{X}^j)$, $ \forall j \in \mathcal{S}^i$, by the celebrated data-processing inequality, we know that
$$
I(\mathbf{Z}^i; \mathbf{Y}^i) \leq I(\mathbf{X}^j; \mathbf{Y}_i), \quad \forall j \in \mathcal{S}^i.
$$
Hence, the following chain of inequalities holds:
$$
\begin{aligned}
I(\mathbf{Z}^i; \mathbf{Y}^i) \leq \min_{j \in \mathcal{S}^i} \{I(\mathbf{X}^j; \mathbf{Y}_i)\}  \leq \max_{j \in \mathcal{D}} \{I(\mathbf{X}^j; \mathbf{Y}_i)\}  &  = I(\mathbf{X}^i; \mathbf{Y}^i) \\
&  \leq I(\mathbf{X}^1, \ldots, \mathbf{X}^D; \mathbf{Y}_i)
\end{aligned}
$$
where the last inequality follows from the fact that the joint mutual information $I(\mathbf{X}^i; \mathbf{Y}^i) \leq I(\mathbf{X}^1, \ldots, \mathbf{X}^D; \mathbf{Y}_i)$ is at least as large as any one of $I(\mathbf{X}_i^1; \mathbf{Y}_i)$ and $I(\mathbf{X}_i^2; \mathbf{Y}_i)$.
Therefore, due to the variational form of the conditional entropy, we have
\begin{align*}
    & \inf _h \mathbb{E}_p[\ell_{\mathrm{CE}}(h(\mathbf{Z}^i), \mathbf{Y}_i)]-\inf _{h^{\prime}} \mathbb{E}_p[\ell_{\mathrm{CE}}(h^{\prime}(\mathbf{X}^1,\ldots, \mathbf{X}^D), \mathbf{Y}_i)] \\
= & H(\mathbf{Y}^i \mid \mathbf{Z}^i)-H(\mathbf{Y}^i \mid \mathbf{X}^1,\ldots, \mathbf{X}^D) \\
= & I(\mathbf{X}^1,\ldots, \mathbf{X}^D ; \mathbf{Y}^i)-I(\mathbf{Z}^i ; \mathbf{Y}^i) \\
\geq & \max_{j \in \mathcal{D}} \{I(\mathbf{X}^j ; \mathbf{Y}^i)\}-\min_{j \in \mathcal{S}^i} \{ I(\mathbf{X}^j ; \mathbf{Y}^i)\} \\
= & I(\mathbf{X}^i ; \mathbf{Y}^i)-\min_{j \in \mathcal{S}^i} \{I(\mathbf{X}^j ; \mathbf{Y}^i)\}  
= \Delta_p^i 
\end{align*}
\end{proof}

\section{Experiment}

\subsection{Default Parameter} \label{appendix_default}The default parameter settings in our experiments are shown in the Table \ref{default-parameter}, where we set the default parameters for each dataset. For previous hyperparameter analyses, we have fixed the other parameters to observe the changes in the selected parameters.

\subsection{Efficiency Analysis}
\label{appendix_inference}
As shown in Table \ref{parameter}, integrating the SDSP module into baseline models results in a minimal increase in parameter size. For instance, integrating SDSP into MMOE increases parameters by only 86.05K for Movielens, from 217.64K to 303.68K. Similarly, PLE+SDSP increases by 89.77K. These results demonstrate that our method is lightweight, adding minimal parameter cost while achieving improved performance, making it suitable for resource-constrained scenarios without significant computational burden.

\begin{table}[t]
\caption{The default parameter setting in our experiments}
\label{default-parameter}
\begin{tabular}{cccc}
\toprule
Dataset                              & Movielens   & Amazon      & Douban      \\ \midrule
Learning Rate                        & 0.01        & 0.0001      & 0.001       \\
Prototype Hyper-parameter   $\gamma$ & 0.0001      & 0.0001      & 0.001       \\
Prototype Number  &10  &10 &10 \\
Decay                                & 0.9         & 0.9         & 0.9         \\
Expert Num                           & {[}1,1,1{]} & {[}1,1,1{]} & {[}1,1,1{]} \\
Batch Size                           & 4096        & 4096        & 4096    \\ \bottomrule   
\end{tabular}
\end{table}

\begin{table}[t]
\fontsize{8}{9.7}\selectfont 
\centering
\caption{Parameter Size.}
\vspace{-1em}
\label{parameter}
\begin{tabular}{cccc}
\toprule
\multirow{2}{*}{Model} & \multicolumn{3}{c}{Parameter Amount (K)} \\ \cmidrule{2-4}
                       & Movielens    & Amazon      & Douban      \\ \midrule
SharedBottom           & 227.59        & 2224.15     & 3433.38    \\
MMOE                   & 217.64        & 2218.81     & 3425.77    \\
PLE                    & 223.91        & 2220.92     & 3429.93    \\
STAR                   & 308.63        & 2277.63     & 3500.43    \\
SAR-Net                & 239.21        & 2226.94     & 3442.73    \\
Adasparse              & 230.32        & 2223.94     & 3433.68    \\
Adaptdhm               & 257.49        & 2245.99     & 3459.32    \\
M2M                    & 569.57        & 2313.07     & 3624.32    \\
PPNet                  & 429.68        & 2365.92     & 3604.85    \\
EPNet                  & 232.33        & 2219.67     & 3431.09     \\ \midrule
MMOE+SDSP              & 303.68        & 2307.64     & 3511.82    \\
Increase               & 86.05         & 88.83       & 86.05      \\ \midrule
PLE+SDSP               & 313.68        & 2308.22     & 3518.45    \\
Increase               & 89.77         & 87.30       & 88.52      \\ \bottomrule
\end{tabular}%
\end{table}

\subsection{Domain Transfer Visualization}
\label{appendix_transfer}
To verify whether the proposed metric can accurately reflect the relationships between domains, we conducted the experiment on Figure \ref{transfer}. In the left figure, the horizontal axis indicates the domain whose performance is being evaluated, while the vertical axis represents the impact on performance when a specific domain is removed during training, i.e., the value in the $i$-th row and $j$-th column shows the performance change of the $i$-th domain when the $j$-th domain is excluded from training. The right figure depicts the domain distances measured by MMOE +SDSP at the best epoch. 

To ensure fairness, we use the same set of parameters for comparison. For the transfer performance, the more performance degradation of the $i$-th domain is caused by removing the $j$-th domain, the closer the two domains should be. As seen in the right figure, the proposed metric reflects this partial order relationship well, with the size ordering of each row being correct.

\begin{figure}[t]
    \centering
    \begin{subfigure}{0.23\textwidth}  \label{transfer_mmoe}\includegraphics[width=\textwidth]{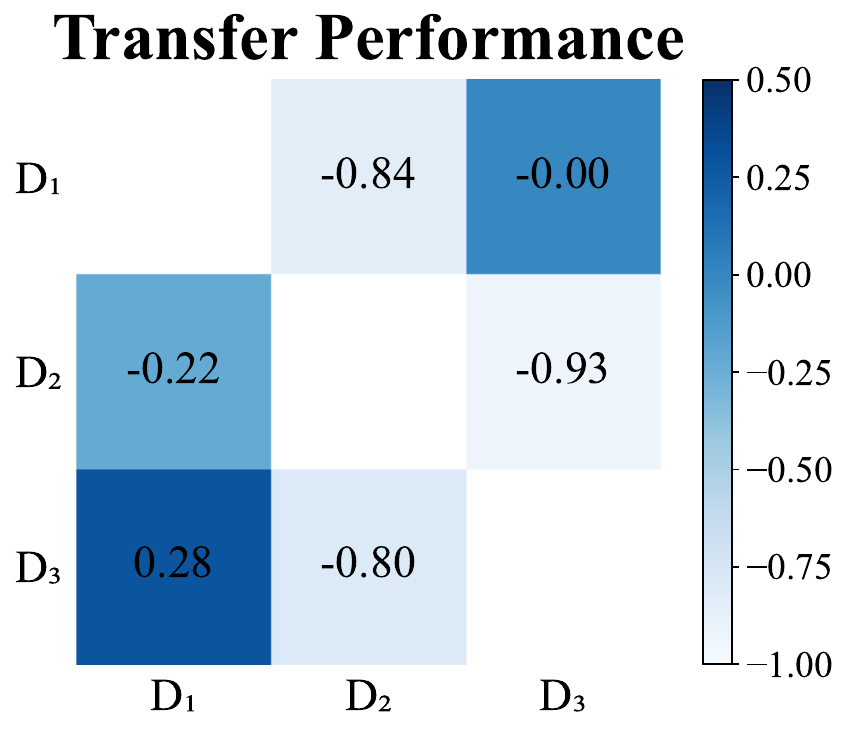}
   \end{subfigure}
    \begin{subfigure}{0.22\textwidth} 
    \label{proto_distance_mmoe}{\includegraphics[width=\textwidth]{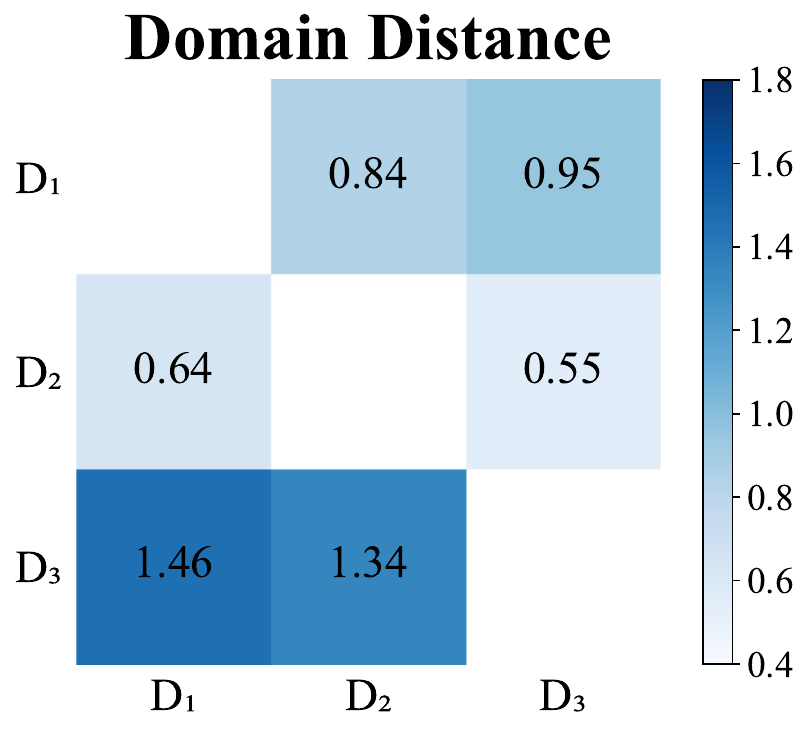}}
       \end{subfigure}
    \caption{Transfer Performance of MMOE and Domain Distance measured by MMOE+SDSP on Movielens dataset.}
    \label{transfer}
\end{figure}

\end{document}